\theoremstyle{plain}
\newtheorem{thm}{Theorem}
\newtheorem*{example}{Example}
\newtheorem{theorem}[thm]{Theorem}
\newtheorem{proposition}[thm]{Proposition}
\newtheorem{lemma}[thm]{Lemma}
\newtheorem*{definition}{Definition}
\newtheorem*{remark}{Remark}
\begin{document}
\title{Multipermutation Ulam Sphere Analysis 
Toward Characterizing Maximal Code Size}

\author
{\IEEEauthorblockN
{Justin Kong and Manabu Hagiwara}
\IEEEauthorblockA{
Department of Mathematics and Informatics, \\
Graduate School of Science,
Chiba University\\
Email: jkong@math.s.chiba-u.ac.jp, hagiwara@math.s.chiba-u.ac.jp
}
}

\maketitle

\begin{abstract}
\boldmath
Permutation codes, in the form of \textit{rank modulation},
 have shown promise for applications such as flash memory.  
 One of the metrics recently suggested as appropriate for rank 
modulation is the Ulam metric, which measures the minimum 
translocation distance between permutations.  Multipermutation 
codes have also been proposed as a generalization of 
permutation codes that would improve code size (and 
consequently the code rate).  In this paper we analyze the 
Ulam metric in the context of multipermutations, noting some 
similarities and differences between the Ulam metric in the 
context of permutations.  We also consider sphere sizes for 
multipermutations under the Ulam metric and resulting bounds 
on code size.
\end{abstract}

\IEEEpeerreviewmaketitle

\section{Introduction}
Permutation (and multipermutation) codes 
were invented as early as 
the 1960's, when Slepian proposed constructing a code by permuting the 
order of the numbers of an initial sequence \cite{Slepian}.  More recently, 
Jiang et al. proposed permutation codes utilizing the Kendall-$\tau$ metric 
for use in flash memory via the \textit{rank modulation} scheme \cite{Jiang}. 
Since then, permutation codes 
and their generalization to multipermutation codes
 have been a hot topic in the research community with various 
related schemes being suggested 
\cite{Barg, Coding with permutations, Farnoud, 
Farnoudtwo, Jiangtwo, Lim}.

One scheme of particular interest was 
the proposal of Farnoud et al. to utilize the Ulam metric in place of the Kendall-$\tau$ metric \cite{Farnoud} 
and subsequent study 
expounded upon code size bounds \cite{Gologlu}.
The Ulam metric measures the minimum number of translocations
needed to transform one permutation into another, whereas 
the Kendall-$\tau$ metric measures the minimum 
adjacent transpositions needed to transform one permutation into another.  
Errors in flash memory devices occur when cell charges leak 
or when rewriting operations cause overshoot errors resulting 
in inaccurate charge levels.   
While the Kendall-$\tau$ metric is suitable for correcting 
relatively small errors of this nature, the 
Ulam metric would be more robust to large 
charge leakages or overshoot errors within a cell.  

However, there is a trade-off in code size when 
rank modulation is used in conjunction with 
the Ulam metric instead of the Kendall-$\tau$ metric. 
The Ulam distance between permutations is 
always less than or equal to the Kendall-$\tau$ 
distance between permutations, which implies that 
the maximum code size for a permutation code utilizing 
the Ulam metric is less than or equal to the maximum 
code size of a permutation code utilizing the 
Kendall-$\tau$ metric \cite{Farnoud}.
One possible compensation for this trade-off is the 
generalization from permutation codes to multipermutation 
codes, which improves the maximum possible code size \cite{Farnoudtwo}.  

In flash memory devices, permutations or multipermutations 
may be modeled physically by relative rankings of cell charges. 
The number of possible messages is limited by the number 
of distinguishable relative rankings.  However, 
it was shown in \cite{Farnoudtwo} that multipermutations 
may significantly increase the total possible messages 
compared to ordinary permutations. 
For example, if only $k$ different charge levels are possible, 
then permutations of length $k$ can be stored.  
Hence, in $r$ blocks of length $k,$ one may store $(k!)^r$ 
potential messages.  On the other hand, if one uses $r$-regular 
multipermutations in the same set of blocks, then 
$(kr)!/(r!)^k$ potential messages are possible. 

Bounds on permutation codes in the Ulam metric were studied in 
\cite{Farnoud} and \cite{Gologlu}.
In \cite{self}, the nonexistence of nontrivial perfect permutation 
codes in the Ulam metric was proven by examining the size of 
Ulam spheres, spheres comprised of all permutations within 
a given Ulam distance of a particular permutation. 
However, no similar study of Multipermutation Ulam spheres 
exists, and currently known bounds on code size 
do not always consider the problem of differing sphere sizes. 
The current paper examines Ulam sphere sizes in the context of 
multipermutations and provides new bounds on code size.

The paper is organized as follows: 
First, Section \ref{prelims} defines notation and basic concepts 
used in the paper. Next, Section \ref{multiperm} compares 
properties of the Ulam metric as defined for permutations 
and multipermutations, and then provides a simplification of 
the $r$-regular Ulam metric for multipermutations 
(Lemmas \ref{n-l} and \ref{translocations}). 
Section \ref{young} considers an application of 
Young Tableaux and the RSK-correspondence 
to calculate $r$-regular Ulam sphere sizes
(Lemma \ref{flambda applied} and Prop. \ref{onesphere}). 
Section \ref{upper} then discusses duplicate translocation 
sets and a method of calculating the size of 
spheres of radius $t=1$ for any center (Thm. \ref{ballcalc}). 
Section \ref{minmax} follows, demonstrating minimal and 
maximal sphere sizes 
(Lemmas \ref{esphere} and \ref{omegasphere}) 
and providing 
both lower and upper bounds on code size
(Lemmas \ref{upperbound}, 
\ref{perfect bound}, and \ref{G-V bound}). 
Finally Section \ref{conclusion} gives some concluding remarks.

\section{Preliminaries and Notation}\label{prelims}
In this section we introduce notation and definitions 
used in this paper.  Unless otherwise stated, definitions are 
based on conventions established in 
\cite{Farnoud}, \cite{Farnoudtwo}, and \cite{self}.
Throughout this paper $n$ and $r$ are assumed to be 
positive integers, $r$ dividing $n$. 
The notation $[n]$ denotes the set $\{ 1, 2, \dots, n\}$ and 
$\mathbb{S}_n$ denotes the set of permutations 
on $[n]$, i.e. the symmetric group of size $n!$.
For $\sigma \in \mathbb{S}_n$, 
we write $\sigma = [\sigma(1), \sigma(2), \dots, \sigma(n)]$, 
where for all $i \in [n],$ $\sigma(i)$ is the image of $i$ 
under $\sigma.$  
Throughout this paper we assume 
$\sigma$, $\pi \in \mathbb{S}_n$.  
With a slight abuse of notation, we may also use 
$\sigma$ to mean the sequence 
$(\sigma(1), \sigma(2), \dots, \sigma(n))\in \mathbb{Z}^n$
associated with $\sigma \in \mathbb{S}_n$.  
Multiplication of permutations is defined by composition so that 
for all $i \in [n]$, we have 
$(\sigma \tau)(i) = \sigma (\tau(i)).$ 
The identity permutation, $[1, 2, \dots, n] \in \mathbb{S}_n$ 
is denoted by $e$.

An $r$-\textit{regular multiset} is a multiset such that 
each of its elements is repeated $r$ times. 
A \textit{multipermutation} is an ordered tuple of the 
elements of a multiset, and in the instance of an $r$-regular multiset, is called 
an \textit{$r$-regular multipermutation}.  
Following the work of \cite{Farnoudtwo}, this study focuses 
on $r$-regular multipermutations, although many results 
are extendible to general multipermutations.

For each $\sigma \in \mathbb{S}_n$ we 
define a corresponding $r$-regular 
multipermutation $\mathbf{m}_\sigma^r$
as follows: 
for all $i \in [n]$ and $j \in [n/r],$ 
\[\mathbf{m}_\sigma^r(i):= j \text{ if and only if } (j-1)r + 1 \le \sigma(i) \le jr,\]
and $\mathbf{m}_\sigma^r :=
(\mathbf{m}_\sigma^r(1), \mathbf{m}_\sigma^r(2), \dots, \mathbf{m}_\sigma^r(n)) \in \mathbb{Z}^n$.
For example, if $n = 6,$ $r = 2$, and $\sigma = [1,5,2,4,3,6],$ then 
$\mathbf{m}_\sigma^r = (1,3,1,2,2,3).$ 
This definition differs slightly from the 
correspondence defined in \cite{Farnoudtwo}, which was 
defined in terms of the inverse permutation.  This is so that 
properties of the Ulam metric for permutations  
will carry over to the Ulam metric for multipermutations
(Lemmas \ref{n-l} and \ref{translocations} of 
Section \ref{multiperm}). 

With the correspondence above, we may define 
an equivalence relation between elements of $\mathbb{S}_n$. 
We say that $\sigma \equiv_r \pi$ if and only if 
$\mathbf{m}_\sigma^r = \mathbf{m}_\pi^r$. 
The equivalence class $R_r(\sigma)$ of $\sigma \in \mathbb{S}_n$ 
is defined as $R_r(\sigma) := \{\pi \in \mathbb{S}_n \;:\; \pi \equiv_r \sigma\}.$ 
For a subset $S \subseteq \mathbb{S}_n,$ the notation 
$\mathcal{M}_r(S) := \{\mathbf{m}_\sigma^r \; : \; \sigma \in S\},$
i.e. the set of $r$-regular multipermutations corresponding
to elements of $S$.

We say that $\sigma \equiv_r \pi$ if and only if 
$\mathbf{m}_\sigma^r = \mathbf{m}_\pi^r$. 
The equivalence class $R_r(\sigma)$ of $\sigma \in \mathbb{S}_n$ 
is defined by $R_r(\sigma) := \{\pi \in \mathbb{S}_n \;:\; \pi \equiv_r \sigma\}.$ 
For a subset $S \subseteq \mathbb{S}_n,$ the notation 
$\mathcal{M}_r(S) := \{\mathbf{m}_\sigma^r \; : \; \sigma \in S\},$
i.e. the set of $r$-regular multipermutations corresponding
to elements of $S$. 

The following definition is our own. 
For any $\mathbf{m} \in \mathbb{Z}^n$, and 
$\sigma \in \mathbb{S}_n,$ we define the product 
(a right group action) 
$\mathbf{m} \cdot \sigma$ by composition, 
similarly to the definition of multiplication 
of permutations. 
More precisely, for all $i \in [n]$, let 
$(\mathbf{m}\cdot \sigma)(i) := 
\mathbf{m}(\sigma(i))$. 
It is easily confirmed that 
$\mathbf{m} \cdot e = \mathbf{m}$
for all $\mathbf{m} \in \mathbb{Z}^n$. 
It is also easily confirmed that for all 
$\sigma, \pi \in \mathbb{S}_n$, we have
$\mathbf{m}\cdot(\sigma\pi) = (\mathbf{m}\cdot\sigma)\cdot\pi$. 
With this definition, notice that 
$\mathbf{m}_\sigma^r \cdot \pi = 
\mathbf{m}^r_{\sigma \pi}$.  
It is possible for different permutations to 
correspond to the same multipermutation, but 
for $\tau \in \mathbb{S}_n,$ it is clear that 
 $\mathbf{m}_\sigma^r = \mathbf{m}_\pi^r$ 
implies 
$\mathbf{m}_{\sigma}^r \cdot \tau = \mathbf{m}_{\pi}^r \cdot \tau.$

We finish this section by defining what 
a multipermutation code is.  
A subset $C \subseteq \mathbb{S}_n$ is called an 
$r$-regular multipermutation code if and only if 
for all $\sigma \in C,$ we also have $R_r(\sigma) \subseteq C.$ 
Such a code is denoted by $\mathsf{MPC}(n,r)$, and 
we say that $C$ is an $\mathsf{MPC}(n,r)$.  
If $C$ is an $\mathsf{MPC}(n,r)$ then 
whenever a permutation is a member of 
$C$  its entire equivalence 
class is also contained within $C$.  
Thus if $C$ is an  
$\mathsf{MPC}(n,r)$ it can be represented by 
the set of $r$-regular multipermutations associated 
with elements of $C$, i.e. the set 
$\mathcal{M}_r(C)$.   
Moreover, we define the cardinality $|C|_r$ 
of an $\mathsf{MPC}(n,r)$ $C$ as 
$|C|_r := |\mathcal{M}_r(C)|$ 
(this notation and definition differs slightly from 
\cite{Farnoudtwo}).

\section{Multipermutation Ulam Metric}\label{multiperm}
In this section we discuss some similarities and differences 
between the Ulam metric for permutations and the Ulam 
metric for multipermutations.  
We begin by defining the Ulam metric for permutations. 

For any two sequences $\mathbf{u}, \mathbf{v} \in \mathbb{Z}^n,$ 
$\ell(\mathbf{u},\mathbf{v})$ denotes the length of the longest 
common subsequence of $\mathbf{u}$ and $\mathbf{v}$. 
In other words,  
$\ell(\mathbf{u}, \mathbf{v})$ is the largest integer $k \in \mathbb{Z}_{>0}$ 
such that there exists a sequence $(a_1, a_2, \dots, a_k)$ 
where for all $p \in [k]$, we have $a_p = \mathbf{u}(i_p) = \mathbf{v}(j_k)$ 
with 
 $1\le i_1 < i_2 < \dots < i_k \le n$ and $1 \le j_1 < j_2 < \dots < j_k\le n$.
 The \textit{Ulam distance} $\mathrm{d}_\circ(\sigma, \pi)$ between permutations 
 $\sigma, \pi \in \mathbb{S}_n$ is defined as 
 $\mathrm{d}_\circ(\sigma, \pi) := n - \ell(\sigma, \pi).$ 
 
 It is also known that the Ulam distance $\mathrm{d}_\circ(\sigma, \pi)$ 
 between $\sigma, \pi \in \mathbb{S}_n$ is equivalent to the 
 minimum number of translocations needed to transform 
 $\sigma$ into $\pi$ \cite{Farnoud}.  Here, for distinct $i, j \in [n],$ 
 the translocation $\phi(i,j) \in \mathbb{S}_n$ 
 is defined as follows:

 \[
\phi (i,j) :=
\begin{cases}
  [1,2,\dots i-1,i+1,i+2, \dots, j, i, j+1,\dots, n]  \\
\hfill  \text{if } i < j  \\
  [1,2,\dots j-1, i, j, j+1, \dots, i-1, i+1, \dots n]   \\ 
\hfill  \text{otherwise } 
  \end{cases} 
\]

The notation $\phi(i,i)$ is understood to mean the identity 
permutation, $e.$ 
When it is not necessary to specify any index, a translocation 
may be written simply as $\phi$. 
Intuitively, when multiplied on the right of a permutation
 $\sigma \in \mathbb{S}_n,$ 
the translocation $\phi(i,j) \in \mathbb{S}_n$ deletes 
$\sigma(i)$ from the 
$i$th position of $\sigma$ and then inserts it in the new 
$j$th position
(shifting positions between $i$ and $j$ in the process).

The \textit{$r$-regular Ulam distance} $\mathrm{d}_\circ^r(\sigma, \pi)$ 
between permutations $\sigma, \pi \in \mathbb{S}_n$ 
is defined as the minimum Ulam distance among all 
members of $R_r(\sigma)$ and $R_r(\pi)$.  That is, 
$\mathrm{d}_\circ^r(\sigma, \pi) := 
\underset{\sigma' \in R_r(\sigma), \pi' \in R_r(\pi) }{\min}
\mathrm{d}_\circ(\sigma', \pi').$  
Notice that the $r$-regular Ulam distance is 
defined over equivalence classes.  

Although technically a distance between equivalence 
classes, it is convenient to think of the $r$-regular 
Ulam distance instead as a distance between 
multipermutations. 
Viewed this way, the property of the Ulam metric for 
permutations, that it can be defined in terms of 
longest common subsequences or equivalently 
in terms of translocations, carries over to the 
$r$-regular Ulam distance.  
The next lemma states that the $r$-regular Ulam 
distance between permutations $\sigma$ and $\pi$ 
is equal to $n$ minus the length of the longest 
common subsequence of their corresponding 
$r$-regular multipermutations.

\begin{lemma}\label{n-l}
$\mathrm{d}_\circ^r(\sigma, \pi) =
n - \ell(\mathbf{m}_\sigma^r, \mathbf{m}_\pi^r).$
\end{lemma}

\begin{proof}
We will first show that 
$\mathrm{d}_\circ^r(\sigma, \pi) \ge n - 
\ell(\mathbf{m}_\sigma^r, \mathbf{m}_\pi^r)$.
By definition of $\mathrm{d}_\circ^r(\sigma,\pi),$ there 
exist $\sigma' \in R_r(\sigma)$ and 
$\pi' \in R_r(\pi)$ such that $\mathrm{d}_\circ^r(\sigma, \pi) 
= \mathrm{d}_\circ(\sigma', \pi') = n - \ell(\sigma', \pi').$ 
Hence if 
for all $\sigma' \in R_r(\sigma)$ and $\pi' \in R_r(\pi)$ 
we have $\ell(\sigma', \pi') \le 
\ell(\mathbf{m}_\sigma^r, \mathbf{m}_\pi^r)$, 
then $\mathrm{d}_\circ^r(\sigma,\pi) 
\ge n - \ell(\mathbf{m}_\sigma^r, \mathbf{m}_\pi^r)$
(subtracting a larger value from $n$ 
results in a smaller overall value).
Therefore it suffices to show that 
that for all $\sigma' \in R_r(\sigma)$ and 
$\pi' \in R_r(\pi),$ that $\ell(\sigma', \pi') \le 
\ell(\mathbf{m}_\sigma^r, \mathbf{m}_\pi^r)$.  
This is simple to prove 
because if two permutations have a common subsequence, 
then their corresponding $r$-regular multipermutations 
will have a related common subsequence. 
Let $\sigma' \in R_r(\sigma)$, $\pi' \in R_r(\pi)$, and 
$\ell(\sigma', \pi') = k.$  Then there exist indexes 
$1 \le i_1 < i_2 < \dots < i_k \le n$ and 
$1 \le j_1 < j_2 < \dots < j_k \le n$ such that  
for all $p \in [k],$ $\sigma'(i_p) = \pi'(j_p).$  
Of course, whenever $\sigma'(i) = \pi'(j)$, then 
$\mathbf{m}_{\sigma'}^r(i) = \mathbf{m}_{\pi'}^r(j)$.
Therefore $\ell(\sigma', \pi') =k \le 
\ell(\mathbf{m}_{\sigma'}^r, \mathbf{m}_{\pi'}^r)
= \ell(\mathbf{m}_\sigma^r, \mathbf{m}_\pi^r).$

Next, we will show that $\mathrm{d}_\circ^r(\sigma, \pi) \le
n - \ell(\mathbf{m}_\sigma^r, \mathbf{m}_\pi^r).$
Note that
\begin{align*}
\mathrm{d}_\circ^r(\sigma, \pi) 
= \underset{\sigma' \in R_r(\sigma), \pi' \in R_r(\pi)}{\min}
\mathrm{d}_\circ(\sigma', \pi') \\ 
= \underset{\sigma' \in R_r(\sigma), \pi' \in R_r(\pi)}{\min} 
(n - \ell(\sigma', \pi')) \\
= n - \underset{\sigma' \in R_r(\sigma), \pi' \in R_r(\pi)}{\max}
\ell(\sigma', \pi').
\end{align*}
Here if $\underset{\sigma' \in R_r(\sigma), \pi' \in R_r(\pi)}{\max} 
\ell (\sigma', \pi') \ge
\ell(\mathbf{m}_\sigma^r, \mathbf{m}_\pi^r)$, 
then \\
$\mathrm{d}_\circ^r(\sigma,\pi) \le 
n - \ell(\mathbf{m}_\sigma^r, \mathbf{m}_\pi^r)$ 
(subtracting a smaller value from $n$ results in 
a larger overall value). 
It is enough to show that 
 there exist $\sigma' \in R_r(\sigma)$ and 
$\pi' \in R_r(\pi)$ such that 
$\ell(\sigma', \pi') \ge 
\ell(\mathbf{m}_\sigma^r, \mathbf{m}_\pi^r)$. 
To prove this fact, we take a longest common subsequence 
of $\mathbf{m}_\sigma^r$ and $\mathbf{m}_\pi^r$ and then 
carefully choose $\sigma' \in R_r(\sigma)$ and 
$\pi' \in R_r(\pi)$ to have an equally long common subsequence.  
The next paragraph describes how this can be done. 

Let $\ell(\mathbf{m}_\sigma^r, \mathbf{m}_\pi^r) = k$ and  
let $(1 \le i_1 < i_2 < \dots < i_k \le n)$ and 
$(1 \le j_1 < j_2 < \dots < j_k \le n)$ be integer sequences 
such that for all $p \in [k],$ 
$\mathbf{m}_\sigma^r(i_p) = \mathbf{m}_\pi^r(j_p).$ 
The existence of such sequences is guaranteed 
by the definition of $\ell(\mathbf{m}_\sigma^r, \mathbf{m}_\pi^r).$ 
Now for all $p \in [k],$ define 
$\sigma'(i_p)$ to be the smallest integer $l \in [n]$ 
such that $\mathbf{m}_\sigma(l) = \mathbf{m}_\sigma(i_p)$ 
and if $q \in [k]$ with $q < p,$ then 
$\mathbf{m}_\sigma^r(i_q) = \mathbf{m}_\pi^r(i_p)$ implies 
$\sigma'(i_q) < \sigma'(i_p) = l.$  
For all $p \in [k],$ define $\pi(j_p)$ similarly.  
Then for all $p \in [k],$ $\sigma'(i_p) = \pi'(j_p).$
The remaining terms of $\sigma'$ and $\pi'$ may 
easily be chosen in such a manner that 
$\sigma' \in R_r(\sigma)$ and $\pi' \in R_r(\pi).$
Thus there exist $\sigma' \in R_r(\sigma)$ and 
$\pi' \in R_r(\pi)$ such that 
$\ell(\sigma',\pi') \ge 
\ell(\mathbf{m}_\sigma^r, \mathbf{m}_\pi^r)$. 
\end{proof}

The following example helps to illuminate 
the choice of $\sigma'$ and $\pi'$ in the proof above.  
If $\mathbf{m}_\sigma^r = (2,1,2,1,3,3)$, and 
$\mathbf{m}_\pi^r = (3,2,2,1,3,1),$ then we 
have $\ell(\mathbf{m}_\sigma^r, \mathbf{m}_\pi^r) = 4,$
with the common subsequence $(2,2,1,3)$ of maximal length. 
Here $(1,3,4,6)$ and $(2,3,4,5)$ are 
sequences with 
$\mathbf{m}_\sigma^r(1) = \mathbf{m}_\pi^r(2)$,
$\mathbf{m}_\sigma^r(3) = \mathbf{m}_\pi^r(3)$, 
$\mathbf{m}_\sigma^r(4) = \mathbf{m}_\pi^r(4)$, and
$\mathbf{m}_\sigma^r(6) = \mathbf{m}_\pi^r(5).$ 
Then following the convention outlined in the proof above, 
$\sigma'(1) = \pi'(2) = 3$, 
$\sigma'(3) = \pi'(3) = 4$,
$ \sigma'(4) = \pi'(4) = 1$, and 
$\sigma'(6) = \pi'(5) = 5$, 
so that $\ell(\sigma',\pi') \ge 4$. 
The other elements of $\sigma'$ and $\pi'$ can 
be chosen as follows so that $\sigma' \in R_r(\sigma)$ and 
$\pi' \in R_r(\pi)$: 
set $\sigma'(2) = 1$, $\sigma'(5) = 6$, 
$\pi'(1) = 1$, and $\pi'(6) = 6.$

If two multipermutations $\mathbf{m}_\sigma^r$ 
and $\mathbf{m}_\pi^r$ have a common subsequence of 
length $k$, then $\mathbf{m}_\sigma^r$ can 
be transformed into $\mathbf{m}_\pi^r$ with $n-k$ (but 
no fewer) delete/insert operations.  
Delete/insert operations correspond to applying 
(multiplying on the right) a translocation.  
Hence by Lemma \ref{n-l} we can 
state the following lemma about the $r$-regular 
Ulam distance.  

\begin{lemma}\label{translocations}
$\mathrm{d}_\circ^r(\sigma, \pi) = 
\min \{ k \in \mathbb{Z}_{\ge0} \;:\; \text{there exists}  \; (\phi_1, \phi_2, \dots, \phi_k)\; 
s.t. \; \mathbf{m}_\sigma^r \cdot \phi_1  \phi_2 \cdots \phi_k = \mathbf{m}_\pi^r \}$.
\end{lemma}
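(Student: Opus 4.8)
The plan is to combine Lemma \ref{n-l} with the standard fact that the Ulam (longest-common-subsequence) distance on sequences coincides with the minimum number of deletion/insertion operations, and then to translate those deletion/insertion operations into right-multiplications by translocations. First I would set $m := \mathrm{d}_\circ^r(\sigma,\pi)$ and let $T$ denote the right-hand side of the claimed equality, i.e. the minimum length of a sequence of translocations $(\phi_1,\dots,\phi_k)$ with $\mathbf{m}_\sigma^r \cdot \phi_1\cdots\phi_k = \mathbf{m}_\pi^r$. The goal is to show $m = T$ by proving both inequalities, leaning on the already-established identity $m = n - \ell(\mathbf{m}_\sigma^r, \mathbf{m}_\pi^r)$ from Lemma \ref{n-l}.

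For the inequality $T \le m$, I would start from a longest common subsequence of $\mathbf{m}_\sigma^r$ and $\mathbf{m}_\pi^r$ of length $\ell := \ell(\mathbf{m}_\sigma^r,\mathbf{m}_\pi^r)$, so that $n-\ell = m$ positions of $\mathbf{m}_\sigma^r$ lie outside this subsequence. The idea is that each entry not in the common subsequence can be moved into its correct final position by a single delete-then-insert operation while leaving the already-matched entries in their relative order; since the intuitive description in the excerpt identifies the deletion of $\mathbf{m}(i)$ and its reinsertion at position $j$ with right-multiplication by the translocation $\phi(i,j)$, this produces a sequence of $n-\ell = m$ translocations carrying $\mathbf{m}_\sigma^r$ to $\mathbf{m}_\pi^r$, giving $T \le m$. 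I would make precise that after fixing the common subsequence, processing the remaining entries one at a time (say from left to right in the target) keeps the matched skeleton intact, so each step costs exactly one translocation.

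For the reverse inequality $m \le T$, I would take an optimal sequence of $T$ translocations with $\mathbf{m}_\sigma^r \cdot \phi_1\cdots\phi_T = \mathbf{m}_\pi^r$ and argue that each translocation decreases the Ulam distance to the target by at most $1$. Concretely, applying one translocation $\phi(i,j)$ deletes a single entry and reinserts it, which changes at most one position's membership in a longest common subsequence with $\mathbf{m}_\pi^r$; hence $\ell(\mathbf{m}\cdot\phi, \mathbf{m}_\pi^r) \ge \ell(\mathbf{m}, \mathbf{m}_\pi^r) - 1$ for any intermediate $\mathbf{m}$, so the quantity $n - \ell(\,\cdot\,, \mathbf{m}_\pi^r)$ drops by at most $1$ per step. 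Starting at $n - \ell(\mathbf{m}_\sigma^r,\mathbf{m}_\pi^r) = m$ and reaching $0$ after $T$ steps forces $m \le T$, and combining with Lemma \ref{n-l} closes the argument.

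The main obstacle I expect is the bookkeeping in the $T \le m$ direction: one must verify that the single deletion/insertion realizing each correction is genuinely a right-multiplication by a translocation $\phi(i,j)$ in the index convention of the paper, and that the order of processing the unmatched entries never disturbs a previously matched entry. This is essentially the content of the remark in the excerpt that ``delete/insert operations correspond to applying a translocation,'' together with the group-action identities $\mathbf{m}\cdot e = \mathbf{m}$ and $\mathbf{m}\cdot(\sigma\pi) = (\mathbf{m}\cdot\sigma)\cdot\pi$ already recorded in Section \ref{prelims}; the rest is the routine but careful verification that no step costs more than one translocation in either direction.
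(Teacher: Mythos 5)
Your overall strategy is the same as the paper's: both directions reduce, via Lemma \ref{n-l}, to controlling how much a single translocation can change $\ell(\cdot,\mathbf{m}_\pi^r)$. Your argument for $T \le m$ (one translocation can always place one more element correctly, so $n-\ell$ translocations suffice) is exactly the paper's first half, carried out slightly more constructively by fixing a common subsequence and inserting the unmatched entries one at a time.

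However, in the lower-bound step the inequality you display is in the wrong direction, and your conclusion does not follow from it as written. You assert $\ell(\mathbf{m}\cdot\phi, \mathbf{m}_\pi^r) \ge \ell(\mathbf{m}, \mathbf{m}_\pi^r) - 1$, i.e.\ that one translocation can \emph{decrease} the LCS length by at most $1$; that only shows the distance $n-\ell(\cdot,\mathbf{m}_\pi^r)$ \emph{increases} by at most $1$ per step, which gives no lower bound on $T$. What the argument needs --- and what the paper proves --- is the opposite control: $\ell(\mathbf{m}\cdot\phi, \mathbf{m}_\pi^r) \le \ell(\mathbf{m}, \mathbf{m}_\pi^r) + 1$, i.e.\ a single translocation can \emph{increase} the LCS length by at most $1$ (``a single translocation can only arrange one element at a time''). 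From that, the distance to the target drops by at most $1$ per step, so starting at $m$ and reaching $0$ after $T$ steps forces $T \ge m$. Since your surrounding prose (``each translocation decreases the Ulam distance to the target by at most $1$'') shows you intend precisely this bound, the fix is simply to flip the displayed inequality; with that correction your proof coincides with the paper's.
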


\begin{proof}
There exists a translocation $\phi \in \mathbb{S}_n$ 
such that $\ell(\mathbf{m}_\sigma^r \cdot\phi, \mathbf{m}_\pi^r) 
= \ell(\mathbf{m}_\sigma^r, \mathbf{m}_\pi^r) + 1$, 
since it is always possible to arrange one element with a 
single translocation.  This then implies that 
$\min \{k \in \mathbb{Z} \;:\; \text{there exists } (\phi_1, \dots, \phi_k) \;s.t.\; 
\mathbf{m}_\sigma^r \cdot \phi_1 \cdots \phi_k = \mathbf{m}_\pi^r\} 
\le n - \ell(\mathbf{m}_\sigma^r, \mathbf{m}_\pi^r) = 
\mathrm{d}_\circ^r(\sigma, \pi).$
At the same time, given 
$\ell (\mathbf{m}_\sigma^r, \mathbf{m}_\pi^r) \le n,$ 
then for all translocations $\phi \in \mathbb{S}_n,$ 
we have that 
$\ell (\mathbf{m}_\sigma^r \cdot \phi, \mathbf{m}_\pi^r) 
\le \ell (\mathbf{m}_\sigma^r, \mathbf{m}_\pi^r) + 1$, 
since a single translocation can only arrange one 
element at a time.  Therefore 
$\min \{k \in \mathbb{Z} \;:\; \text{there exists } (\phi_1, \dots, \phi_k) \text{ s.t } 
\mathbf{m}_\sigma^r \cdot \phi_1 \cdots \phi_k = \mathbf{m}_\pi^r\} 
\ge n - \ell(\mathbf{m}_\sigma^r, \mathbf{m}_\pi^r) = 
\mathrm{d}_\circ^r(\sigma, \pi)$, by 
Lemma \ref{n-l}.
\end{proof}

Lemmas \ref{n-l} and \ref{translocations} allow us to view the 
Ulam metric for $r$-regular multipermutations 
similarly to the way we view the Ulam metric 
for permutations; in terms of longest common 
subsequences or in terms of the minimum number 
of translocations.  
Another known 
property of the Ulam metric for permutations is left 
invariance, i.e. given $
\tau \in \mathbb{S}_n,$ 
we have $\mathrm{d}_\circ(\sigma, \pi) = 
\mathrm{d}_\circ(\tau\sigma, \tau\pi).$  
However, left invariance does not hold in general 
for multipermutations, as the next lemma indicates.

\begin{lemma}
Let $n/r \ge 2$ and $r \ge 2.$
Then there exist $\sigma', \pi' \in \mathbb{S}_n$ such that 
$\mathrm{d}_\circ^r(e, \sigma') \ne \mathrm{d}_\circ^r(\pi' e,\pi'\sigma').$
\end{lemma}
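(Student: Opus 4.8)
The plan is to prove the statement by exhibiting an explicit pair $\sigma', \pi' \in \mathbb{S}_n$ for which the two $r$-regular Ulam distances disagree. The first step I would take is to reduce the problem using Lemma \ref{n-l}: since $\mathrm{d}_\circ^r(\sigma,\pi) = n - \ell(\mathbf{m}_\sigma^r, \mathbf{m}_\pi^r)$, and two length-$n$ sequences have a longest common subsequence of length $n$ exactly when they are equal, we get $\mathrm{d}_\circ^r(\sigma,\pi) = 0$ if and only if $\mathbf{m}_\sigma^r = \mathbf{m}_\pi^r$, i.e.\ if and only if $\sigma \equiv_r \pi$. Because $\pi' e = \pi'$, it therefore suffices to find $\sigma' \not\equiv_r e$ with $\pi'\sigma' \equiv_r \pi'$: then $\mathrm{d}_\circ^r(e,\sigma') \ge 1$ while $\mathrm{d}_\circ^r(\pi' e, \pi'\sigma') = 0$. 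Using the identity $\mathbf{m}_\sigma^r\cdot\pi = \mathbf{m}_{\sigma\pi}^r$ from Section \ref{prelims}, these two requirements become statements about the right action: I need a $\sigma'$ that \emph{fixes} $\mathbf{m}_{\pi'}^r$ (so $\mathbf{m}_{\pi'}^r\cdot\sigma' = \mathbf{m}_{\pi'}^r$) but \emph{moves} $\mathbf{m}_e^r$ (so $\mathbf{m}_e^r\cdot\sigma' \ne \mathbf{m}_e^r$).

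The conceptual reason left invariance fails is that multiplying on the left by $\pi'$ relabels the \emph{values} of a permutation, whereas $\equiv_r$ only records which block of $r$ consecutive values each entry falls in; a left multiplication straddling a block boundary can thus identify two inequivalent permutations. To exploit this I would take $\pi'$ to be the transposition of the boundary values $r$ and $r+1$, which is legitimate because $n/r \ge 2$ guarantees at least two blocks. Then $\mathbf{m}_{\pi'}^r$ agrees with $\mathbf{m}_e^r$ except that the symbols in positions $r$ and $r+1$ are interchanged. For $\sigma'$ I would take the transposition of the \emph{positions} $r-1$ and $r+1$, which is legitimate because $r \ge 2$ forces $r - 1 \ge 1$. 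The key observation is that in $\mathbf{m}_{\pi'}^r$ positions $r-1$ and $r+1$ both carry the symbol $1$, so $\sigma'$ fixes $\mathbf{m}_{\pi'}^r$, whereas in $\mathbf{m}_e^r$ these two positions carry different symbols, so $\sigma'$ moves $\mathbf{m}_e^r$.

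The only real obstacle is choosing this $\sigma'$ correctly, which amounts to comparing the partitions of $[n]$ into level sets (positions sharing a common symbol) induced by $\mathbf{m}_e^r$ and by $\mathbf{m}_{\pi'}^r$; these partitions coincide except across the single block boundary that $\pi'$ straddles, and the transposition of positions $r-1$ and $r+1$ is chosen to lie within a level set of the former but across a level set of the latter. Once $\sigma'$ and $\pi'$ are fixed, the remaining verification is routine: I would read off the two symbol comparisons at positions $r-1$ and $r+1$ directly from the definition of $\mathbf{m}^r$, conclude $\mathbf{m}_{\pi'}^r\cdot\sigma' = \mathbf{m}_{\pi'}^r$ and $\mathbf{m}_e^r\cdot\sigma' \ne \mathbf{m}_e^r$, and then invoke Lemma \ref{n-l} as above to obtain $\mathrm{d}_\circ^r(\pi' e,\pi'\sigma') = 0 \ne \mathrm{d}_\circ^r(e,\sigma')$. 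The hypotheses $r \ge 2$ and $n/r \ge 2$ are used precisely to guarantee, respectively, the valid index $r-1$ and the existence of a block boundary for $\pi'$ to cross.
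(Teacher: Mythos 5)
Your proposal is correct, and it takes a genuinely different route from the paper. The paper proves the lemma by brute-force construction: it takes $\sigma'$ to be the reversal of the first $2r$ entries of $e$ and $\pi'$ to be an interleaving permutation, then computes both distances explicitly via Lemma \ref{n-l} (longest common subsequences), obtaining $\mathrm{d}_\circ^r(e,\sigma') = r \ge 2$ while $\mathrm{d}_\circ^r(\pi' e, \pi'\sigma') = 1$. You instead collapse the second distance all the way to $0$: choosing $\pi'$ to be the transposition of the values $r, r+1$ and $\sigma'$ the transposition of positions $r-1, r+1$, you get $\mathbf{m}_{\pi'}^r \cdot \sigma' = \mathbf{m}_{\pi'}^r$ (since those two positions carry equal symbols in $\mathbf{m}_{\pi'}^r$) but $\mathbf{m}_e^r \cdot \sigma' \ne \mathbf{m}_e^r$, so the distances are $0$ versus $\ge 1$. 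I checked the arithmetic: with $r \ge 2$ and $n \ge 2r$ all indices are legal, $\mathbf{m}_{\pi'}^r(r-1) = \mathbf{m}_{\pi'}^r(r+1) = 1$, and $\mathbf{m}_e^r(r-1) = 1 \ne 2 = \mathbf{m}_e^r(r+1)$, so the argument goes through. Your approach is more elementary --- it needs only the right-action identity $\mathbf{m}_{\pi'}^r \cdot \sigma' = \mathbf{m}_{\pi'\sigma'}^r$ and the trivial observation that $\mathrm{d}_\circ^r = 0$ iff the multipermutations coincide, with no LCS computation --- and it isolates the conceptual cause: left multiplication does not respect the equivalence classes $R_r(\cdot)$, so the quotient map destroys invariance. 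What the paper's heavier example buys is a stronger conclusion: there both distances are strictly positive ($r$ versus $1$), showing that invariance fails even between genuinely distinct multipermutation classes and that the distortion can be as large as $r-1$, rather than being an artifact of two permutations collapsing to the same class.
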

\begin{proof}
Let 
$n/r \ge 2$ and $r \ge 2.$
Define $\sigma', \pi' \in \mathbb{S}_n$ by 
\begin{align*}
\sigma' := 
[\underset{2r}{\underbrace{2r, 2r-1, \dots, 1}}, 
\underset{n-2r}{\underbrace{2r+1, 2r+2, \dots, n}}] \text{ and }
\end{align*} 
\begin{align*} 
\pi' := 
[\underset{2r}{\underbrace{1, r+1, 2, r+2, \dots, r, 2r}}, 
\underset{n-2r}{\underbrace{2r+1, 2r+2, \dots, n}}].
\end{align*} 

First, consider $\mathrm{d}_\circ^r(e,\sigma').$ 
Note that for $\mathbf{m}_e^r$ and $\mathbf{m}_{\sigma'}^r,$ 
for any integer $i$ such that $2r < i \le n$ we have  
$e(i) = \sigma'(i)$, which implies $\mathbf{m}_e^r(i) = \mathbf{m}_{\sigma'}^r(i).$ 
Meanwhile, the first $2r$ elements of 
$\mathbf{m}_e^r$ and $\mathbf{m}_{\sigma'}^r$ are 
$(\underset{r}{\underbrace{1,1,\dots,1}},
\underset{r}{\underbrace{2,2,\dots,2}})$ and 
$(\underset{r}{\underbrace{2,2,\dots,2}},
\underset{r}{\underbrace{1,1,\dots,1}})$ respectively, 
so that the longest common subsequence of the first 
$2r$ elements of $\mathbf{m}_e^r$ and $\mathbf{m}_{\sigma'}^r$ 
is comprised of $r$ $1$'s or $r$ $2$'s.  
Hence $\ell(\mathbf{m}_e^r, \mathbf{m}_{\sigma'}^r) = 
(n-2r) + r = n-r$, which by lemma \ref{n-l} implies 
that $\mathrm{d}_\circ^r(e, {\sigma'}) = r \ge 2.$ 

Next, consider $\mathrm{d}_\circ^r(\pi' e, \pi' \sigma').$ 
Multiplying $\pi'$ and $\sigma'$ yields 
\begin{align*}
\pi'\sigma' = 
[\underset{2r}{\underbrace{2r, r, 2r-1, r-1, \dots, r+1, 1}}, 
\underset{n-2r}{\underbrace{2r+1, 2r+2, \dots, n}}].
\end{align*}
For all integers $i$ 
such that $2r < i \le n,$ we then have $\pi' e(i) = \pi'(i) = \pi'\sigma'(i) 
\implies  \mathbf{m}_{\pi' e}^r(i) = \mathbf{m}_{\pi'\sigma'}^r(i).$ 
Meanwhile, the first $2r$ elements of $\mathbf{m}_{\pi' e}^r$
and $\mathbf{m}_{\pi'\sigma'}$ are 
$(1,2,1,2,\dots,1,2)$ and $(2,1,2,1,\dots,2,1)$ respectively. 
Thus the longest common subsequence of the first $2r$ 
elements of $\mathbf{m}_{\pi' e}^r$ and $\mathbf{m}_{\pi'\sigma'}^r$ 
is any length $2r-1$ sequence of alternating $1$'s and $2$'s. 
Hence $\ell(\mathbf{m}_{\pi' e}^r, \mathbf{m}_{\pi'\sigma'}^r) 
= (n-2r) + (2r-1) = n-1,$ which by lemma \ref{n-l} implies 
that $\mathrm{d}_\circ^r(\pi' e, \pi' \sigma') = 1.$
\end{proof}

The fact that left invariance does not hold for 
the $r$-regular Ulam metric has implications 
on $r$-regular Ulam sphere sizes, defined and 
discussed in the next section.  
Left invariance implies sphere size does not depend upon
the center. However, we will demonstrate that in the 
multipermutation case sizes may differ 
depending upon the center, a 
fact previously unknown. 


\section{Young Tableaux Sphere Size Calculation}\label{young}
In \cite{self}, Young tableaux and the 
RSK-Correspondence were utilized to calculate 
Ulam Sphere sizes. A similar approach can be applied 
to $r$-regular Ulam spheres of arbitrary radius centered at 
$\mathbf{m}_e^r$.  
It is first necessary to introduce some basic notation and 
definitions regarding Young tableaux.  Additional information on the subject 
can be found in \cite{Fulton}, \cite{Schensted}, and \cite{Stanley}.

A \textit{Young diagram} is a left-justified collection of cells with a
(weakly) decreasing number of cells in each row below.  Listing the 
number of cells in each row gives a partition 
$\lambda = (\lambda_1, \lambda_2, \dots, \lambda_k)$ 
 of $n,$ where $n$ is the total
number of cells in the Young diagram.  
The notation $\lambda \vdash n$ indicates $\lambda$ is a partition of $n$.
Because the partition $\lambda \vdash n$ 
defines a unique Young diagram and vice versa, 
 a Young diagram may be referred to  by its associated 
 partition $\lambda \vdash n$.  
 For example, the partition $\lambda := (4,3,1) \vdash 8$ has the 
 corresponding Young diagram pictured below. 
 \[
\begin{Young}
   & & & \cr 
  & & \cr 
  \cr
\end{Young}
\]

A \textit{Young tableau}, or simply a \textit{tableau}, is 
a filling of a Young diagram $\lambda \vdash n$  
such that values in all cells are weakly increasing across 
each row and strictly increasing down each column. 
 If each of the integers 
$1$ through $n$ appears exactly once in a tableau $T$ that 
is a filling of a Young diagram $\lambda \vdash n,$ then we 
call $T$ a \textit{standard Young tableau}, abbreviated $SYT$.  

The \textit{Schensted algorithm} is an 
algorithm for obtaining a tableau $T\leftarrow x$ from 
a  tableau $T$ and a real number $x$.  
The algorithm may be defined as follows: \\~\\
$1)$ Set $i := 1$. \\
$2)$ If row $i$ (of $T$) is empty or if $x$ is greater or equal to 
     each of the entries in the $i$th row then input $x$ 
     in a new box at the end of row $i$
     and terminate the algorithm.
     Otherwise, proceed to step $3$. \\
$3)$ Find the minimum entry $y$ in row $i$ such that $y > x$ 
     and swap $y$ and $x$. That is, replace $y$ with $x$ in its box 
     and set $x := y.$   \\
$4)$ Set $i := i + 1$, and return to step $2$.  \\

As an example, 
let $T$ be the tableau pictured below on the far left. 
The following diagrams illustrate the stages of the
Schensted algorithm applied to $T$ to obtain $T \leftarrow 2.$ 
\[
\overset{T}{
\underset{x = 2,\; i = 1}{
\begin{Young}
1&1 &2 &3  \cr
 3&4 \cr 
 4 \cr
\end{Young}}}
\;\;\;
\underset{\text{set } x := 3, \; i := 2 }{
\begin{Young}
1&1&2 &2  \cr
 3 & 4 \cr 
 4 \cr
\end{Young}}
\;\;\;
\underset{\text{set } x := 4, \; i := 3}{
\begin{Young}
1 & 1 & 2 & 2 \cr
3 & 3  \cr
4 \cr
\end{Young}} 
\;\;\;
\overset{T \leftarrow 2}{
\begin{Young}
1 & 1 & 2 & 2 \cr
2 & 3 & 3 \cr
4 & 4\cr
\end{Young}}
\]

The Schensted algorithm may be applied to the sequence 
$(\mathbf{m}_\sigma^r(1),\mathbf{m}_\sigma^r(2), \dots, 
\mathbf{m}_\sigma^r(n))$ of an $r$-regular multipermutation $
\mathbf{m}_\sigma^r$ 
to obtain a unique tableau 
$P := (\dots(\mathbf{m}_\sigma^r(1) \leftarrow 
\mathbf{m}_\sigma^r(2))\leftarrow \dots )\leftarrow
\mathbf{m}_\sigma^r(n)$.
Meanwhile, a unique standard tableau results 
from recording where each new box appears in the construction 
of $P$.  This recording is accomplished by inputing the 
value $i$ in the new box that appears when $\mathbf{m}_\sigma^r(i)$ 
is added (via the Schendsted algorithm) to 
$(\dots(\mathbf{m}_\sigma^r(1)\leftarrow
(\mathbf{m}_\sigma^r(2) \leftarrow \dots) \leftarrow
\mathbf{m}_\sigma^r(i-1)$.  
Following conventions, we denote
 the standard tableau resulting from this recording 
 method by $Q$.
 As an example, the two tableaux  pictured below
 are the respective $P$ and $Q$ resulting from 
 the multipermutation $(2,3,2,1,3,1)$.  
 Notice that $P$ and $Q$ have the same shape.
 Intermediate steps are omitted for brevity.  
\[
\overset{P}{
\begin{Young}
1& 1 &3 \cr
  2&2 \cr 
 3\cr
\end{Young}}
\; \; \; \; \; \; \; \; \; \; 
\overset{Q}{
\begin{Young}
1&  2&  5\cr
3&6 \cr
4\cr
\end{Young}}
\]

The \textit{RSK-correspondence} (\cite{Fulton, Stanley}) 
provides a bijection between $r$-regular multipermutations
 $\mathbf{m}_\sigma^r$ and ordered pairs $(P,Q)$ on the same 
 Young diagram $\lambda \vdash n,$ where 
 $P$ is a tableaux whose members come from 
$\mathbf{m}_\sigma^r$ and $Q$ is a $SYT$.
A stronger form of the following lemma appears in \cite{Fulton}.

\begin{lemma}\label{length lemma}
Let $\sigma \in S_n$ and 
$P$ be the tableau resulting from running 
the Schendsted algorithm on the entries of $\sigma$.  
Then the number of columns in $P$ is equal to 
$\ell(\mathbf{m}_e^r,\mathbf{m}_\sigma^r)$, 
the length of the longest non-decreasing subsequence of 
$\mathbf{m}_\sigma^r$.
\end{lemma}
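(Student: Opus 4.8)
The plan is to derive the identity by splitting it into two equalities that combine to give the result: first that the number of columns of $P$ equals $L$, where I write $L$ for the length of a longest non-decreasing subsequence of $\mathbf{m}_\sigma^r$; and second that $L=\ell(\mathbf{m}_e^r,\mathbf{m}_\sigma^r)$. I want to be explicit about which tableau $P$ is, rather than pass over it silently: $P$ is the insertion tableau constructed in the paragraph just before the lemma by running the Schensted algorithm on the word $(\mathbf{m}_\sigma^r(1),\dots,\mathbf{m}_\sigma^r(n))$, and I work with that tableau throughout, since it is the object to which the cited result of \cite{Fulton} applies. Running the algorithm on the permutation $\sigma$ itself would instead produce a tableau whose column count is the length of a longest \emph{strictly} increasing subsequence of $\sigma$; for $r=1$ this coincides with $\ell(\mathbf{m}_e^r,\mathbf{m}_\sigma^r)$, but for $r\ge 2$ it need not (e.g.\ a decreasing $\sigma$ within a single value-block gives column count $1$ while $L$ can be as large as $r$), so the relevant tableau is the one formed from $\mathbf{m}_\sigma^r$.

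For the first equality I would invoke the word form of Schensted's theorem, the ``stronger form'' referenced in \cite{Fulton}: the number of columns of $P$ is the length of its first row (the first row is the longest, since row lengths weakly decrease), and for the insertion tableau of any finite sequence the first row has length equal to a longest weakly increasing (non-decreasing) subsequence of that sequence. Applied to $\mathbf{m}_\sigma^r$ this gives that the number of columns of $P$ equals $L$. If a self-contained argument is wanted, I would argue by induction on the number of inserted letters, maintaining the invariant that after inserting the first $i$ letters the first-row length equals the length of a longest non-decreasing subsequence of the length-$i$ prefix; the step needing care is that a newly inserted letter is appended to the first row (rather than bumping an entry) exactly when it is at least the current last entry of that row, which is precisely the condition for it to extend a longest non-decreasing subsequence.

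For the second equality I would use that $\mathbf{m}_e^r=(1,\dots,1,2,\dots,2,\dots,n/r,\dots,n/r)$ is the sorted word in which each value of $[n/r]$ occurs exactly $r$ times. Any common subsequence of $\mathbf{m}_e^r$ and $\mathbf{m}_\sigma^r$ is in particular a subsequence of the sorted word $\mathbf{m}_e^r$, hence non-decreasing, and also a subsequence of $\mathbf{m}_\sigma^r$, so $\ell(\mathbf{m}_e^r,\mathbf{m}_\sigma^r)\le L$. Conversely, a non-decreasing subsequence of $\mathbf{m}_\sigma^r$ uses each value at most $r$ times (since $\mathbf{m}_\sigma^r$ contains each value exactly $r$ times), and because $\mathbf{m}_e^r$ lists each value in sorted order with exactly $r$ repetitions, any such non-decreasing sequence embeds as a subsequence of $\mathbf{m}_e^r$; it is therefore a common subsequence, giving $L\le\ell(\mathbf{m}_e^r,\mathbf{m}_\sigma^r)$. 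Combining the two equalities yields the lemma.

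I expect the main obstacle to be the first equality, specifically justifying the first-row-length characterization for a word with repeated letters rather than for a permutation: bumping can leave several equal entries in the first row, so one must check that the invariant survives insertions of letters equal to existing first-row entries and that the first row grows by at most one per insertion (the clean way is the refined patience-sorting statement that the $k$-th entry of the first row is the least possible final value of a length-$k$ non-decreasing subsequence of the processed prefix). Citing the weakly increasing case from \cite{Fulton} sidesteps this technicality entirely, while the second equality is an elementary consequence of $\mathbf{m}_e^r$ being sorted with uniform multiplicity $r$.
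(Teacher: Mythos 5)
Your proposal is correct, and it is substantially more than the paper provides: the paper offers no proof of this lemma at all, merely remarking that a stronger form appears in \cite{Fulton}. Your decomposition into two equalities matches what the citation silently bundles together. The first equality (columns of $P$ equal the longest weakly increasing subsequence of the word $\mathbf{m}_\sigma^r$) is exactly the word form of Schensted's theorem that \cite{Fulton} supplies, and your patience-sorting invariant --- that the $k$-th entry of the first row is the least possible final value of a length-$k$ non-decreasing subsequence of the processed prefix --- is the standard way to make it self-contained for words with repeated letters. The second equality, $\ell(\mathbf{m}_e^r,\mathbf{m}_\sigma^r)=L$, is asserted inside the paper's lemma statement without comment; your two-sided argument (a common subsequence with the sorted word is forced to be non-decreasing, and conversely any non-decreasing subsequence of $\mathbf{m}_\sigma^r$ uses each value at most $r$ times and so embeds in $\mathbf{m}_e^r$) is precisely the missing justification, and it is where the hypothesis that $\mathbf{m}_e^r$ is sorted with uniform multiplicity $r$ actually enters. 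You are also right to flag, and correct, the statement's loose phrasing ``the Schensted algorithm on the entries of $\sigma$'': the paragraph preceding the lemma runs the algorithm on $\mathbf{m}_\sigma^r$, and your counterexample (a $\sigma$ decreasing within a single value-block gives column count $1$ for the permutation word but $L=r$ for the multipermutation word) shows the literal reading fails for $r\ge 2$. In short: same mathematical content as the cited result, but your write-up makes explicit two steps --- the reduction to the word $\mathbf{m}_\sigma^r$ and the identification of the LCS with a sorted word as a longest non-decreasing subsequence --- that the paper leaves entirely to the reader.
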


The above lemma, in conjunction with the RSK-correspondence, means that 
for all $k \in [n]$, the size of the set 
$\{\mathbf{m}_\sigma^r \in \mathcal{M}_r(\mathbb{S}_n) \;:\; 
\ell(\mathbf{m}_e^r,\mathbf{m}^r_\sigma) = k\}$ is equal to the 
sum of the number of ordered pairs 
$(P,Q)$ on each Young diagram $\lambda \vdash n$ such that 
$\lambda_1 = k,$ where $P$ is a tableaux whose members come from 
$\mathbf{m}_\sigma^r$ and $Q$ is a $SYT$. 
The number of $SYT$ on a particular $\lambda \vdash n$ is 
denoted by $f^\lambda.$ We denote by $K^\lambda_r$ 
(our own notation) the 
number of Young tableaux on $\lambda \vdash n$ such that 
each $i \in [n/r]$ appears exactly $r$ times. 
The next lemma states the relationship between 
$|S(\mathbf{m}_e^r, t)|$, $f^{\lambda}$, and $K^\lambda_r$.

\begin{lemma} \label{flambda applied}
Let 
$t \in [0,n-1]$, and 
$\Lambda := \{\lambda \vdash n \;:\; \lambda_1 \ge n-t\}.$
Then 
$|S (\mathbf{m}_e^r, t)| = \underset{\lambda \in \Lambda}{\sum} (f^{\lambda})(K^\lambda_r).$
\end{lemma}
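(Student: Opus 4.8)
The plan is to unwind the definition of the sphere $S(\mathbf{m}_e^r, t)$ and then translate the membership condition, via the earlier lemmas and the RSK-correspondence, into a condition on the RSK shape $\lambda$. Recall that $S(\mathbf{m}_e^r, t)$ is the set of $r$-regular multipermutations within $r$-regular Ulam distance $t$ of $\mathbf{m}_e^r$. By Lemma \ref{n-l}, a multipermutation $\mathbf{m}_\sigma^r$ lies in $S(\mathbf{m}_e^r, t)$ if and only if $n - \ell(\mathbf{m}_e^r, \mathbf{m}_\sigma^r) \le t$, equivalently $\ell(\mathbf{m}_e^r, \mathbf{m}_\sigma^r) \ge n - t$. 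Since $\mathbf{m}_e^r = (1,\dots,1,2,\dots,2,\dots,n/r,\dots,n/r)$ is weakly increasing, $\ell(\mathbf{m}_e^r, \mathbf{m}_\sigma^r)$ is exactly the length of a longest non-decreasing subsequence of $\mathbf{m}_\sigma^r$, matching the formulation in Lemma \ref{length lemma}.

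First I would apply Lemma \ref{length lemma}: if $(P,Q)$ is the RSK image of $\mathbf{m}_\sigma^r$ with common shape $\lambda \vdash n$, then the number of columns of $P$, namely $\lambda_1$, equals $\ell(\mathbf{m}_e^r, \mathbf{m}_\sigma^r)$. Hence the membership condition $\ell(\mathbf{m}_e^r, \mathbf{m}_\sigma^r) \ge n-t$ is equivalent to $\lambda_1 \ge n-t$, i.e. $\lambda \in \Lambda$. In other words, $\mathbf{m}_\sigma^r \in S(\mathbf{m}_e^r, t)$ if and only if its RSK shape lies in $\Lambda$.

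Next I would count using the RSK-correspondence. Since RSK is a bijection between $r$-regular multipermutations and pairs $(P,Q)$ on a common shape $\lambda \vdash n$, where $P$ has content consisting of each $i \in [n/r]$ exactly $r$ times and $Q$ is an $SYT$, the number of multipermutations with a fixed shape $\lambda$ is precisely $(K^\lambda_r)(f^\lambda)$, the number of choices for $P$ times the number of choices for $Q$. Partitioning $S(\mathbf{m}_e^r, t)$ according to RSK shape and summing over $\lambda \in \Lambda$ then yields $|S(\mathbf{m}_e^r, t)| = \sum_{\lambda \in \Lambda} (f^\lambda)(K^\lambda_r)$, as desired. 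Partitions $\lambda \in \Lambda$ that cannot accommodate the required content simply contribute $K^\lambda_r = 0$ and are harmlessly included in the sum.

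The argument is essentially a composition of the two stated lemmas with the RSK bijection, so no single step is technically deep; the main points requiring care are the correct identification of $\ell(\mathbf{m}_e^r, \mathbf{m}_\sigma^r)$ with $\lambda_1$ and the bookkeeping that $|S(\mathbf{m}_e^r, t)|$ counts a set of \emph{multipermutations} (rather than permutations or equivalence classes), so that the RSK bijection applies directly and the product $(K^\lambda_r)(f^\lambda)$ enumerates exactly the shape-$\lambda$ members.
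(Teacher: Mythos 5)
Your proof is correct and follows essentially the same route as the paper's: both convert sphere membership into the condition $\ell(\mathbf{m}_e^r,\mathbf{m}_\sigma^r)\ge n-t$ via Lemma \ref{n-l}, identify this length with $\lambda_1$ via Lemma \ref{length lemma}, and then count by partitioning the sphere according to RSK shape, with $(f^\lambda)(K^\lambda_r)$ pairs per shape. The only cosmetic difference is that the paper first groups shapes by the exact value $\lambda_1 = l$ and then sums over $l \ge n-t$, whereas you sum over $\lambda \in \Lambda$ directly; your observation that infeasible shapes contribute $K^\lambda_r = 0$ is a harmless refinement.
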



\begin{proof}
Assume $t \in [0,n-1]$. 
Let $\Lambda := \{\lambda \vdash n \;:\; \lambda_1 \ge n-t\}.$
Furthermore, let 
$\Lambda^{(l)} := \{\lambda \vdash n \;:\; \lambda_1 = l\},$
the set of all partitions of $n$ 
having exactly $l$ columns.  
By the RSK-Correspondence, and Lemma \ref{length lemma}, 
there is a bijection between the set 
$\{\mathbf{m}_\sigma^r \;:\; \ell(\mathbf{m}_e^r, \mathbf{m}_\sigma^r) = l\}$ and 
the set of ordered pairs $(P,Q)$ where both $P$ and $Q$ have 
exactly $l$ columns.  
This implies that 
$|\{\mathbf{m}_\sigma^r \;:\; \ell(\mathbf{m}_e^r,\mathbf{m}_\sigma^r) = l\}| = 
\underset{\lambda \in \Lambda^{(l)}}{\sum} (f^{\lambda})(K^\lambda_r)$.
Note that by Lemma \ref{n-l}, 
\begin{eqnarray*}
|S (\mathbf{m}_e^r, t)| &=& 
|\{ \mathbf{m}_\sigma \;: \; \mathrm{d}_\circ^r(e,\sigma) \le t\}| \\
&=& |\{ \mathbf{m}_\sigma \;: \; \ell(\mathbf{m}_e^r,\mathbf{m}_\sigma^r)  
\ge n-t \}|.
\end{eqnarray*}
Hence it follows that 
$|S (\mathbf{m}_e^r, t)| 
= \underset{\lambda \in \Lambda}{\sum} (f^{\lambda})(K^\lambda_r).$
\end{proof}


The formula below, known as the 
\textit{hook length formula}, is due to Frame, Robinson, and Thrall \cite{Frame, Fulton}. 
In the formula,  the notation $(i,j) \in \lambda$
 is used to refer to the cell in the $i$th 
row and $j$th column of a Young diagram 
$\lambda \vdash n$.  The notation $h(i,j)$ denotes the
 \textit{hook length} of $(i,j) \in \lambda$, i.e., 
the number of boxes below or to the right of $(i,j)$, including the box $(i,j)$ 
itself.  More formally, 
$h(i,j) := \vert \{ (i,j^*) \in \lambda \;:\; j^* \ge j\} \cup \{ (i^*,j)
 \in \lambda \;:\; i^* \ge i\}\vert$.
 The formula is as follows:

\[
f^\lambda = \frac{n!}{\underset{(i,j)\in\lambda} \Pi h(i,j)}.
\]

Thus by applying Lemma \ref{flambda applied}, 
it is possible to calculate $r$-regular sphere size 
by using the hook length formula. 
We will use this strategy to treat the sphere of radius $r = 1$.  
However, because sphere sizes are calculated 
recursively, we must first calculate the sphere size 
when $r = 0.$

\begin{remark}\label{remark1}
$|S(\mathbf{m}_e^r, 0) = 1|$. 
Although this is an obvious fact, we wish to consider why it is true 
from the perspective of Lemma \ref{flambda applied}. 
Note first that there is only one partition $\lambda \vdash n$ 
such that $\lambda_1 = n$, namely 
$\lambda' := (n)$ with the associated Young diagram below.  
\[
\overbrace{
\begin{Young}
& & \dots &  \cr
\end{Young}
}^{n}
\]
It is clear that there is only one possible Young tableau on $\lambda'$ 
so that $(f^{\lambda'}) = 1,$ and thus by Lemma \ref{flambda applied} 
$\vert S(\mathbf{m}_e^r, 0) \vert$ = 1.
\end{remark}

The following proposition is an application of 
Lemma \ref{flambda applied}. 

\begin{proposition}\label{onesphere}
$|S (\mathbf{m}_e^r, 1)|  = 1 + (n-1)(n/r-1).$
\end{proposition}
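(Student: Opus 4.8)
The plan is to invoke Lemma \ref{flambda applied} with $t = 1$, so that $|S(\mathbf{m}_e^r, 1)| = \sum_{\lambda \in \Lambda}(f^\lambda)(K^\lambda_r)$ with $\Lambda = \{\lambda \vdash n : \lambda_1 \ge n-1\}$. The first step is to enumerate $\Lambda$: since $\lambda_1 \in \{n-1, n\}$ and $|\lambda| = n$, the only candidates are the single-row partition $(n)$ and the partition $(n-1,1)$. Thus the sum collapses to two terms, and the entire proof reduces to computing $f^\lambda$ and $K^\lambda_r$ for these two shapes.

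For $\lambda = (n)$, Remark \ref{remark1} already supplies $f^{(n)} = 1$, and $K^{(n)}_r = 1$ because a tableau on a single row must be weakly increasing, leaving only the sorted arrangement $1^r 2^r \cdots (n/r)^r$ of the regular multiset. This term contributes $1$.

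For $\lambda = (n-1,1)$, I would compute $f^{(n-1,1)}$ directly from the hook length formula: the hook lengths are $n$ at the corner $(1,1)$, then $n-2, n-3, \dots, 1$ along the remainder of the top row, and $1$ at $(2,1)$, giving $f^{(n-1,1)} = n!/(n\cdot(n-2)!) = n-1$. The substantive step is evaluating $K^{(n-1,1)}_r$, the number of fillings of this shape by the regular multiset. Here I would argue that such a tableau is determined entirely by the value $b$ placed in the lone second-row cell $(2,1)$, since the remaining $n-1$ entries are then forced into weakly increasing order across the top row. The column-strict condition requires the top-left entry (the minimum of the top row) to be strictly less than $b$; because $b \ge 2$ guarantees that a copy of $1$ remains in the top row, whereas $b = 1$ makes the constraint unsatisfiable, exactly the choices $b \in \{2, 3, \dots, n/r\}$ yield valid tableaux, each uniquely. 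Hence $K^{(n-1,1)}_r = n/r - 1$, and this term contributes $(n-1)(n/r - 1)$.

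Summing the two contributions yields $|S(\mathbf{m}_e^r, 1)| = 1 + (n-1)(n/r-1)$. The main obstacle is the count $K^{(n-1,1)}_r$: one must verify both that the filling is uniquely determined by the corner value and that column-strictness excludes precisely the single choice $b = 1$, so that exactly $n/r - 1$ of the $n/r$ possible corner values survive. The hook length evaluation of $f^{(n-1,1)}$, by contrast, is routine.
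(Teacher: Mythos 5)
Your proposal is correct and follows essentially the same route as the paper's proof: apply Lemma \ref{flambda applied} with $t=1$, note that the only shapes with $\lambda_1 \ge n-1$ are $(n)$ and $(n-1,1)$, get $f^{(n-1,1)} = n-1$ from the hook length formula, and count $K^{(n-1,1)}_r = n/r - 1$ by the choice of the single second-row entry. If anything, your treatment is slightly more careful than the paper's, since you explicitly verify via column-strictness that the corner value $b=1$ is excluded and that exactly the $n/r - 1$ values $b \in \{2,\dots,n/r\}$ yield valid tableaux.
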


\begin{proof}
First note that $|S (\mathbf{m}_e^r, 0)| = 1.$ 
  There is only one possible partition 
 $\lambda \vdash n$ such that $\lambda_1 = n-1,$
  namely $\lambda := (n-1,1),$ with its Young diagram pictured below. 
\[
\overbrace{
\begin{Young}
& & \dots &  \cr
 \cr
\end{Young}
}^{n-1}
\]

Therefore by Lemma \ref{flambda applied}, 
 $|S (\mathbf{m}_e^r, 1)| = 1 + (f^{\lambda'})(K^{\lambda'}_r).$  
Applying the well-known hook length formula 
(\cite{Frame, Fulton}), we obtain 
$f^{\lambda'} = n-1.$  The value $K^{\lambda'}_r$ is characterized 
by possible fillings of row $2$ with the stipulation that each $i \in [n/r]$
 must appear 
exactly $r$ times in the diagram.  In this case, since there is only a single 
box in row $2$, the possible fillings are $i \in [n/r-1],$ 
each of which yields a unique Young tableau 
of the desired type.  Hence $K^{\lambda'}_r = n/r -1$, 
which implies that $|S (\mathbf{m}_e^r, 1)| = 1 + (n-1)(n/r-1).$
\end{proof}

Proposition \ref{onesphere} demonstrates how 
Young Tableaux may be used to calculate $r$-regular 
Ulam spheres centered at $\mathbf{m}_e^r$.

\section{$r$-Regular Ulam Spheres and Duplication Sets}\label{upper}
In the previous section we showed how multipermutation 
Ulam spheres may be calculated when the center is 
$\mathbf{m}_e^r$. In this section we 
provide a way to calculate sphere sizes 
for any center when the radius is $t=1$. 
The $r$-regular Ulam sphere sizes play an 
important role in understanding the
potential code size for a given minimum distance. 

For example, the well-known sphere-packing 
and Gilbert-Varshamov bounds rely on 
calculating, or at least bounding sphere sizes.  
In the case of permutations, recall that the 
Ulam sphere $S(\sigma,t)$ centered at 
$\sigma$ of radius $t$ was defined as 
$S(\sigma,t) := \{\pi \in \mathbb{S}_n \;:\; 
\mathrm{d}_\circ(\sigma,\pi) \le t\}$, which is 
equivalent by definition to the set 
$\{\pi \in \mathbb{S}_n \;:\; 
n - \ell(\sigma, \pi) \le t\}$.

In the case of $r$-regular multipermutations, for 
$t \in \mathbb{Z}_{>0}$,
we introduce the following analogous definition 
of a sphere. 
\begin{definition}
Define 
\[
S(\mathbf{m}_\sigma^r, t) := \{
\mathbf{m}_\pi^r \in \mathcal{M}_r(\mathbb{S}_n) 
\;:\; \mathrm{d}_\circ^r(\sigma,\pi) \le t\}
\]
We call $S(\mathbf{m}_\sigma^r, t)$ 
the \textbf{$r$-regular Ulam sphere}  
centered at $\mathbf{m}_\sigma^r$ of radius $t$.
\end{definition} 

By Lemma \ref{n-l}, 
$S(\mathbf{m}_\sigma^r,t) = 
\{\mathbf{m}_\pi^r
 \in \mathcal{M}_r(\mathbb{S}_n) \;:\; 
 n - \ell(\mathbf{m}_\sigma^r, \mathbf{m}_\pi^r)  \le t\}$. 
 
It should be noted, however, that the notation $\mathbf{m}_\pi^r$
 is a bit misleading because given 
  $\mathbf{m}_\pi^r \in \mathcal{M}(\mathbb{S}_n),$ 
 we cannot uniquely determine $\pi.$  
The $r$-regular Ulam sphere definition can also be viewed 
in terms of translocations. Lemma \ref{translocations} implies 
that $S(\mathbf{m}_\sigma^r, t)$ is equivalent to 
 $\{\mathbf{m}_\pi^r 
 \in \mathcal{M}_r(\mathbb{S}_n) \;:\; 
\text{there exists } k \in [t] \text{ and } (\phi_1, \dots, \phi_k) \text{ s.t. } 
\mathbf{m}_\sigma^r \cdot \phi_1 \cdots \phi_k  = \mathbf{m}_\pi^r\}$.

Lemma \ref{flambda applied} provided a way to 
calculate $r$-regular Ulam spheres centered at 
$\mathbf{m}_e^r$. 
Unfortunately, the choice of center has an impact 
on the size of the sphere, as 
is easily confirmed by comparing Proposition \ref{onesphere} 
to Lemma \ref{omegasphere} (Section \ref{minmax}).
Hence the applicability of Lemma \ref{flambda applied}
 is limited.

We begin to address the issue of differing sphere 
sizes by considering the radius $t = 1$ case. 
To aid with calculating such sphere sizes, 
we introduce 
(as our own definition) 
the following subset of the set of translocations.  
\begin{definition} 
Let $n \in \mathbb{Z}_{>0}$. 
 Define 
 \[T_n := \{ \phi(i,j) \in \mathbb{S}_n \; : \; i - j \ne 1\}.\]
 We call $T_n$ the \textbf{unique set of translocations}.
\end{definition}
By definition, $T_n$ is the set of all translocations in 
$\mathbb{S}_n$, except 
translocations of the form $\phi(i,i-1)$.  We exclude translocations 
of this form because they can be modeled by translocations of the form 
$\phi(i-1,i)$, and are therefore redundant.  

We claim that the set $T_n$ is precisely the set of translocations 
needed to obtain all unique permutations within the 
Ulam sphere of radius $1$ via multiplication.  
Moreover, there is no redundancy in the set, 
that is, there is no smaller set of translocations 
yielding the entire Ulam sphere of radius $1$ 
when multiplied with a given center permutation. 
These facts are stated in the next lemma.

\begin{lemma}

$S(\sigma, 1) = \{\sigma \phi \in \mathbb{S}_n \;:\; \phi \in T_n\}$ 
and \\ $|T_n| = |S(\sigma, 1)|$. 

\end{lemma}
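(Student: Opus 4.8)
The plan is to prove the two assertions in turn, noting at the outset that the cardinality statement $|T_n| = |S(\sigma,1)|$ follows almost formally once the set equality $S(\sigma,1) = \{\sigma\phi : \phi \in T_n\}$ is established. Indeed, since left multiplication by $\sigma$ is a bijection of $\mathbb{S}_n$ onto itself, the map $\phi \mapsto \sigma\phi$ is injective on the subset $T_n$, so $|\{\sigma\phi : \phi \in T_n\}| = |T_n|$, and the set equality then yields $|S(\sigma,1)| = |T_n|$. Thus the substance of the lemma lies entirely in the set equality. The organizing fact I would use is the translocation characterization of the Ulam distance \cite{Farnoud}: for $\sigma, \pi \in \mathbb{S}_n$ one has $\mathrm{d}_\circ(\sigma,\pi) \le 1$ if and only if $\pi = \sigma\,\phi(i,j)$ for some $i,j \in [n]$, where $i = j$ is permitted and reproduces the center $\sigma$ since $\phi(i,i) = e \in T_n$.

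Before the inclusions I would record the single redundancy that motivates the definition of $T_n$, namely that a backward adjacent translocation coincides with the corresponding forward one, $\phi(i, i-1) = \phi(i-1, i)$ as elements of $\mathbb{S}_n$. This is a direct verification from the piecewise definition of $\phi$: both one-line expressions fix every position outside $\{i-1, i\}$ and interchange the entries in positions $i-1$ and $i$. The consequence I want is that discarding the translocations with $i - j = 1$ removes no permutation that is not already produced by a translocation remaining in $T_n$, because $\phi(i-1,i)$ (with index difference $-1 \ne 1$) lies in $T_n$.

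With these in place I would prove the set equality by two inclusions. For $\{\sigma\phi : \phi \in T_n\} \subseteq S(\sigma,1)$: each $\phi \in T_n$ is either $e$ or a single translocation $\phi(i,j)$ with $i \ne j$, so $\sigma\phi$ is reached from $\sigma$ by at most one translocation and hence $\mathrm{d}_\circ(\sigma, \sigma\phi) \le 1$. For the reverse inclusion, I would take $\pi \in S(\sigma,1)$ and write $\pi = \sigma\,\phi(i,j)$ using the translocation characterization; if $i - j \ne 1$ then $\phi(i,j) \in T_n$ directly, while if $i - j = 1$ (that is $j = i-1$) the redundancy relation rewrites $\pi = \sigma\,\phi(i-1,i)$ with $\phi(i-1,i) \in T_n$. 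Either way $\pi \in \{\sigma\phi : \phi \in T_n\}$, which completes the set equality, and the cardinality claim then follows by the injectivity argument of the first paragraph.

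The step I expect to require the most care is the reverse inclusion together with the redundancy relation supporting it: one must confirm that every distance-$1$ permutation admits a representative translocation lying in $T_n$, so that no sphere element is lost when the $\phi(i,i-1)$ forms are excluded. The redundancy identity $\phi(i,i-1) = \phi(i-1,i)$ settles exactly this point, so the genuine work is the bookkeeping of the piecewise definition of $\phi$; once that identity is verified, both inclusions and the injectivity step are routine.
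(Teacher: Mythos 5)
Your proof is correct, and the set-equality half follows essentially the same route as the paper: invoke the translocation characterization of Ulam distance, note the single redundancy $\phi(i,i-1)=\phi(i-1,i)$, and run the two inclusions. Where you genuinely diverge is the cardinality half. The paper proves $|T_n|=|S(\sigma,1)|$ by computing both sides explicitly: it gets $|S(\sigma,1)|=1+(n-1)^2$ by specializing Proposition~\ref{onesphere} to $r=1$, and separately counts $|T_n| = n + (n-1)(n-1) - (n-1) = 1+(n-1)^2$ by enumerating index pairs and collapsing the $n$ copies of the identity. You instead observe that left multiplication by $\sigma$ is injective (group cancellation), so $|\{\sigma\phi : \phi\in T_n\}| = |T_n|$, and the cardinality claim falls out of the set equality for free. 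Your argument is more elementary and self-contained --- it needs neither Proposition~\ref{onesphere} nor any counting --- and it makes explicit the ``no redundancies for permutations'' fact that the paper only records informally in the paragraph following the lemma. What the paper's heavier route buys is the explicit value $|T_n| = 1+(n-1)^2$, which is not merely decoration: it is reused later (e.g.\ in the proof of Theorem~\ref{ballcalc}, which subtracts duplication-set sizes from $1+(n-1)^2$). So your proof establishes the lemma as stated, but anyone following your route would still need the paper's counting argument (or the $r=1$ case of Proposition~\ref{onesphere}) to recover that numerical value downstream.
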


\begin{proof}
We will first show that 
$S(\sigma, 1) = \{\sigma \phi \in \mathbb{S}_n \;:\; \phi \in T_n\}$. 
Note that 
\begin{eqnarray*}
S(\sigma, 1) &=& 
\{\pi \in \mathbb{S}_n \;:\; \mathrm{d}_\circ(\sigma,\pi) \le 1\} \\
&=& \{\sigma \phi(i,j) \in \mathbb{S}_n \; : \; i,j \in [n]\}.
\end{eqnarray*}
It is trivial that 
\begin{eqnarray*}
T_n &=& \{\phi(i,j) \in \mathbb{S}_n \;:\; i - j \ne 1\} \\
&\subseteq& \{\phi(i,j) \in \mathbb{S}_n \;:\; i,j \in [n]\}.
\end{eqnarray*}
Therefore 
$\{\sigma\phi \in \mathbb{S}_n \;:\; \phi \in T_n\} \subseteq 
S(\sigma,1).$ 

To see why $S(\sigma, 1) \subseteq 
\{\sigma\phi \in \mathbb{S}_n \;:\; \phi \in T_n\}$, 
consider any $\sigma\phi(i,j) \in 
 \{\sigma \phi(i,j) \in \mathbb{S}_n \; : \; i,j \in [n]\} = 
S(\sigma,1).$  
If $i - j \ne 1,$ then $\phi(i,j) \in T_n,$ and thus 
$\sigma\phi(i,j) \in \{\sigma \phi \in \mathbb{S}_n \;:\; \phi \in T_n\}.$ 
Otherwise, if $i - j = 1,$ then $\sigma \phi(i,j) = \sigma \phi(j,i)$, and 
$i - j = 1 \implies j - i = -1 \ne 1,$ so $\phi(j,i) \in T_n.$  
Hence $\sigma\phi(i,j) = \sigma\phi(j,i) \in 
\{\sigma \phi \in \mathbb{S}_n \;:\; \phi \in T_n\}.$

Next we show that $|T_n| = |S(\sigma, 1)|$.
By Proposition \ref{onesphere}
(in the case that $r = 1$), $|S(\sigma,1)| = 1+ (n-1)^2$.  
On the other hand, $|T_n| = 
|\{\phi(i,j) \in \mathbb{S}_n \;:\; i - j \ne 1\}|.$  
If $i = 1,$ then there are $n$ values $j \in [n]$ 
such that $i - j \ne 1.$  Otherwise, if $i \in [n]$ but 
$i \ne 1,$ then there are $n-1$ values $j \in [n]$ such 
that $i - j \ne 1.$  However, for all $i, j \in [n],$ 
$\phi(i,i) = \phi(j,j) = e$ so that there are $n-1$ 
redundancies.  Therefore 
$|T_n| = n + (n-1)(n-1) - (n-1) = 1 + (n-1)^2.$    
\end{proof} 

In the case 
of permutations, the set 
$T_n$ has no redundancies.  If 
$\phi_1, \phi_2 \in T_n,$ then 
$\sigma\phi_1 = \sigma\phi_2$ implies $\phi_1 = \phi_2.$  
Alternatively, in the case of multipermutations, 
the set $T_n$ can generally be shrunken 
to exclude redundancies. 
Notice that 
$S(\mathbf{m}_\sigma^r, 1) 
= \{\mathbf{m}_\pi^r \in \mathcal{M}_r(\mathbb{S}_n) 
\;:\; 
\text{there exists } \phi \text{ s.t. } \mathbf{m}_\sigma^r \cdot \phi = \mathbf{m}_\pi\}$, which is equal to 
$\{\mathbf{m}_\sigma^r \cdot \phi \in \mathcal{M}_r(\mathbb{S}_n) 
\;:\; \phi \in T_n\}.$  
However, it is possible that there exist 
$\phi_1, \phi_2 \in T_n$ such that $\phi_1 \ne \phi_2,$ 
but $\mathbf{m}_\sigma^r \cdot \phi_1 = \mathbf{m}_\sigma^r \cdot \phi_2.$ 
In such an instance we may refer to either $\phi_1$ or $\phi_2$ 
as a \textbf{duplicate translocation} for $\mathbf{m}_\sigma^r$.  

If we remove all duplicate translocations for $\mathbf{m}_\sigma^r$ 
from $T_n$, then the resulting set will have the same 
cardinality as the $r$-regular Ulam sphere of radius $1$ centered 
at $\mathbf{m}_\sigma^r$. The next definition (our own) 
is the set of standard duplicate translocations. 
For the remainder of the paper, assume that 
$\mathbf{m}$ is an $n$-length integer tuple, i.e. 
$\mathbf{m} \in \mathbb{Z}^n$.

\begin{definition}
Define
\begin{eqnarray*}
D(\mathbf{m}) := 
\{ \; \; \; \phi(i,j) \in T_n\backslash\{e\} \;:\; 
(\mathbf{m}(i) = \mathbf{m}(j)) &&
\\
 \text{ or } (
\mathbf{m}(i) = \mathbf{m}(i-1) 
\text{ or } i =1)&
\}&
\end{eqnarray*}
We call $D(\mathbf{m})$ the \textbf{standard
duplicate translocation set} for $\mathbf{m}$. 
For each $i \in [n]$, also define 
$D_i(\mathbf{m}) := 
\{ \phi(i,j) \in D_n \; : \; j \in [n]\}$.
\end{definition}

If we take an $r$-regular multipermutation $\mathbf{m}_\sigma^r,$ 
then removing $D(\mathbf{m}_\sigma^r)$ from $T_n$ 
equates to removing a set 
of duplicate translocations.  
These duplications come in two varieties.  
The first variety corresponds to the 
first condition of the $D(\mathbf{m})$ definition, 
when $\mathbf{m}(i) = \mathbf{m}(j)$.
For example, if $\sigma \in \mathbb{S}_6$ such that 
$\mathbf{m}_\sigma^2 = (1,3,2,2,3,1)$, then we have 
$\mathbf{m}_\sigma^2\cdot \phi(1,5) = (3,2,2,3,1,1) = 
\mathbf{m}_\sigma^2 \cdot \phi(1,6),$ 
since $\mathbf{m}_\sigma^2(2) = 3 = \mathbf{m}_\sigma^2(4)$. 
This is because moving the first $1$ to the left or to the right 
of the last $1$ results in the same tuple. 

The second variety corresponds to the second condition of 
the $D(\mathbf{m})$ definition above, 
when $\mathbf{m}(i) = \mathbf{m}(i-1)$. 
For example, if 
$\mathbf{m}_\sigma^2 = (1,3,2,2,3,1)$ as before, then for all 
$j \in \{1,2,3,4,5,6\},$ we have 
$\mathbf{m}_\sigma^2 \cdot \phi(3,j) = \mathbf{m}_\sigma^2 \cdot \phi(4, j).$ 
This is because any translocation that deletes and inserts
 the second of the two adjacent $2$'s does not result in 
 a different tuple when compared to deleting and 
 inserting the first of the 
 two adjacent $2$'s.

\begin{lemma}\label{D} 
$S(\mathbf{m}_\sigma^r, 1) = 
\{\mathbf{m}_\sigma^r\cdot \phi \in \mathcal{M}_r(\mathbb{S}_n) \; :\; 
\phi \in T_n\backslash D(\mathbf{m}_\sigma^r)\}.$
\end{lemma}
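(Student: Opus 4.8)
The plan is to build on the identity $S(\mathbf{m}_\sigma^r,1)=\{\mathbf{m}_\sigma^r\cdot\phi : \phi\in T_n\}$ recorded just before the statement (a consequence of Lemma \ref{translocations}), and to show that deleting the translocations of $D(\mathbf{m}_\sigma^r)$ from $T_n$ leaves the image under $\mathbf{m}_\sigma^r\cdot(-)$ unchanged. Writing $\mathbf{m}:=\mathbf{m}_\sigma^r$ for brevity, one inclusion is free: since $T_n\setminus D(\mathbf{m})\subseteq T_n$, we have $\{\mathbf{m}\cdot\phi : \phi\in T_n\setminus D(\mathbf{m})\}\subseteq\{\mathbf{m}\cdot\phi : \phi\in T_n\}=S(\mathbf{m},1)$. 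Hence everything reduces to the reverse inclusion, which is equivalent to the claim that each discarded translocation is genuinely redundant: for every $\phi\in D(\mathbf{m})$ there is some $\psi\in T_n\setminus D(\mathbf{m})$ with $\mathbf{m}\cdot\psi=\mathbf{m}\cdot\phi$.

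To carry this out I would first make explicit the delete/insert reading of a translocation noted after the definition of $\phi(i,j)$: the tuple $\mathbf{m}\cdot\phi(i,j)$ is obtained from $\mathbf{m}$ by removing the entry $\mathbf{m}(i)$ and reinserting that value so that it lands in position $j$ of the result. I would then treat the two clauses of $D(\mathbf{m})$ in turn. For the second variety, where $\mathbf{m}(i)=\mathbf{m}(i-1)$, I would verify that $\mathbf{m}\cdot\phi(i,j)=\mathbf{m}\cdot\phi(i-1,j)$ for every $j$: deleting either of two adjacent equal entries leaves the identical shortened tuple, so reinserting the common value at the same target position $j$ produces the same multipermutation. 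For the first variety, where the reinserted value equals the entry it is placed next to ($\mathbf{m}(i)=\mathbf{m}(j)$), I would show that the translocation inserting on the opposite side of that equal entry yields the identical tuple, so it suffices to retain one of the pair. In both cases the replacement must be checked to lie in $T_n$ (i.e.\ not of the excluded form $\phi(k,k-1)$) and to differ from $e$.

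The delicate point, which I expect to be the main obstacle, is guaranteeing that the replacement $\psi$ can always be selected \emph{outside} $D(\mathbf{m})$, so that no element of $S(\mathbf{m},1)$ is reachable only through discarded translocations. For a run of equal entries longer than two, a single second-variety replacement may again be discarded, so I would iterate, replacing $\phi(i,j)$ by $\phi(i',j)$ where $i'$ is the leftmost index of the maximal run of equal values through position $i$; this $i'$ no longer satisfies the left-neighbour condition, placing it outside that clause. The cleanest organisation is to exhibit, for each target $\mathbf{n}\in S(\mathbf{m},1)$, a single canonical translocation producing it (deleting from the leftmost position of the relevant run and inserting on the admissible side) and to confirm directly from the definition that this canonical translocation avoids both clauses. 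The accounting is most error-prone at the boundary $i=1$, where the left-neighbour test degenerates and the corresponding guard in the definition of $D(\mathbf{m})$ comes into play; checking that the canonical representative of a target whose moved element starts in position $1$ is never inadvertently removed is precisely the step that requires the most care, and is where I would concentrate the verification.
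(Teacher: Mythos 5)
Your plan follows the paper's proof essentially step for step: both reduce the statement to showing that every discarded translocation $\phi(i,j)\in D(\mathbf{m}_\sigma^r)$ has a replacement $\phi(i',j')\in T_n\setminus D(\mathbf{m}_\sigma^r)$ with the same action on $\mathbf{m}_\sigma^r$, both handle the adjacent-equal clause by sliding the deletion index to the leftmost position of its run of equal values, both handle the $\mathbf{m}(i)=\mathbf{m}(j)$ clause by adjusting the insertion position past the run of values equal to $\mathbf{m}(i)$, and both chain the two replacements when the first lands back in the other clause (the paper's Case IIA explicitly falls through to its Case IA, exactly the iteration you describe).

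One point in your plan is wrong and would block the verification you defer to the end: you require the replacement to \emph{differ from} $e$. No such requirement is needed, and in degenerate cases it cannot be met. Since $D(\mathbf{m})$ is by definition a subset of $T_n\setminus\{e\}$, the identity always lies in $T_n\setminus D(\mathbf{m})$ and is a legitimate replacement. Moreover, when every entry of $\mathbf{m}$ strictly between the deletion and insertion positions equals $\mathbf{m}(i)$ (e.g.\ $\mathbf{m}=(1,1,2,2)$ and $\phi(1,2)$), the discarded translocation acts trivially, and one can check that \emph{every} non-identity $\phi(i,j)$ fixing $\mathbf{m}$ satisfies $\mathbf{m}(i)=\mathbf{m}(j)$ and so lies in $D(\mathbf{m})$; the identity is then the only admissible replacement. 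The paper handles exactly this case by setting $i'=j'=1$, i.e.\ taking $\phi(1,1)=e$. Dropping your extra requirement repairs the plan, and the rest of your outline then matches the paper's argument.
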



\begin{proof}
Notice 
$S(\mathbf{m}_\sigma^r, 1) = 
\{\mathbf{m}_\sigma^r \cdot \phi \in \mathcal{M}_r(\mathbb{S}_n)
 \;:\; \phi \in T_n\}$. 
Hence it suffices to show that for all 
$\phi(i,j) \in D(\mathbf{m}_\sigma^r),$ there exists some 
$i', j' \in [n]$ such that $\phi(i',j') \in 
T_n\backslash D(\mathbf{m}_\sigma^r)$ and 
$\mathbf{m}_\sigma^r\cdot \phi(i,j) =
 \mathbf{m}_\sigma^r\cdot \phi(i',j').$ 
We proceed by dividing the proof into two 
main cases.  Case I is when 
$(\mathbf{m}_\sigma^r(i) \ne \mathbf{m}_\sigma^r(i-1)$ or $i=1$. 
Case II is when 
$(\mathbf{m}_\sigma^r(i) = \mathbf{m}_\sigma^r(i-1).$ 

Case I (when $(\mathbf{m}_\sigma^r(i) \ne \mathbf{m}_\sigma^r(i-1)$ or $i=1$)
can be split into two subcases: 
\begin{align*}
&& \text{Case IA: } &i < j &&\\
&& \text{Case IB: } &i > j. &&
\end{align*}

We can ignore the instance when $i = j$, since 
$\phi(i,j) \in D(\mathbf{m}_\sigma^r)$ implies 
$i \ne j.$  
For case IA, if for all $p \in [i,j]$
(for $a,b \in \mathbb{Z}$ with 
$a<b,$ the notation $[a,b] := \{a, a+1, \dots, b\}$) we have 
$\mathbf{m}_\sigma^r(i) = \mathbf{m}_\sigma^r(p)$, then 
$\mathbf{m}_\sigma^r \cdot \phi(i,j) = \mathbf{m}_\sigma^r \cdot e.$ 
Thus setting $i' = j' = 1$ yields the desired result. 
Otherwise, if there exists $p \in [i,j]$ such that 
$\mathbf{m}_\sigma^r(i) \ne \mathbf{m}_\sigma^r(p),$ 
then let 
\[
j^* := j - \min \{k \in \mathbb{Z}_{>0} \; : \; 
\mathbf{m}_\sigma^r(i) \ne \mathbf{m}_\sigma^r (j - k)\}.
\]
Then $\phi(i,j^*) \in T_n\backslash D(\mathbf{m}_\sigma^r)$ 
and $\mathbf{m}_\sigma^r \cdot \phi(i,j) 
= \mathbf{m}_\sigma^r \cdot \phi(i,j^*).$ 
Thus setting $i' = i$ and $j' = j^*$ yields the desired result. 
Case IB is similar to Case IA.  

Case II (when 
$\mathbf{m}_\sigma^r(i) = \mathbf{m}_\sigma^r(i-1)$), 
can also be divided into two subcases. 
\begin{align*}
&&\text{Case IIA: }& i < j &&\\
&&\text{Case IIB: }& i > j .&&
\end{align*}

As in Case I, we can ignore the instance 
when $i = j$. 
For Case IIA, if for all $p \in [i,j]$ we have 
$\mathbf{m}_\sigma^r(i) = \mathbf{m}_\sigma^r(p),$ 
then $\mathbf{m}_\sigma^r \cdot \phi(i,j) 
= \mathbf{m}_\sigma^r \cdot e,$
so setting $i = j = 1$ achieves the desired result. 
Otherwise, if there exists $p \in [i,j]$ such that 
$\mathbf{m}_\sigma^r(i) \ne \mathbf{m}_\sigma^r(p),$ 
then let 
\small
\[
i^* := i - \min 
\{k \in \mathbb{Z}_{>0} \;:\; 
(\mathbf{m}_\sigma^r(i) \ne \mathbf{m}_\sigma^r(i - k -1)) 
\text{ or } (i - k =1)\}.
\]  
\normalsize
Then 
$\mathbf{m}_\sigma^r \cdot \phi(i,j) 
= \mathbf{m}_\sigma^r \cdot \phi(i^*, j)$ 
and either one of the following is true: (1)
$\phi(i^*,j) \notin D_{i^*}(\mathbf{m}_\sigma^r) 
 \implies \phi(i^*,j) \notin D(\mathbf{m}_\sigma^r)$, so set 
 $i' = i^*$ and $j' = j$; or (2) 
 by Case IA there exist $i',j' \in [n]$ such that
  $\phi(i',j') \in T_n \backslash 
D(\mathbf{m}_\sigma^r)$ and
 $\mathbf{m}_\sigma^r \cdot \phi(i',j') 
 = \mathbf{m}_\sigma^r \cdot \phi(i^*,j) 
 = \mathbf{m}_\sigma^r \cdot \phi(i,j).$ 
 Case IIB is similar to Case IIA. 
\end{proof}

While Lemma \ref{D} shows 
that $D(\mathbf{m}_\sigma^r)$ is a set of duplicate translocations for 
$\mathbf{m}_\sigma^r,$ 
we have not shown 
that $T_n\backslash D(\mathbf{m}_\sigma^r)$ is the 
set of minimal size having the quality that 
$S(\mathbf{m}_\sigma^r,1) = 
\{\mathbf{m}_\sigma^r \cdot \phi \in \mathcal{M}_r(\mathbb{S}_n) \;:\; \phi \in 
T_n\backslash D(\mathbf{m}_\sigma^r)\}$.  
In fact it is not minimal. 
In some instances it is possible to remove 
further duplicate translocations to reduce the set size.  
We will define another set of duplicate translocations, but 
a few preliminary definitions are first necessary.

We say that $\mathbf{m}$ is 
\textit{alternating} if for all odd integers $1 \le i \le n$, 
 $\mathbf{m}(i) = \mathbf{m}(1)$ and 
for all even integers $2\le i'\le n$, 
$\mathbf{m}(i') = \mathbf{m}(2)$ but 
$\mathbf{m}(1) \ne \mathbf{m}(2)$. 
In other words, any alternating tuple is 
of the form 
$(a,b,a,b,\dots,a,b)$ or $(a,b,a,b,\dots,a)$ 
where $a,b \in \mathbb{Z}$ and 
$a \ne b$.
Any singleton is also said to be alternating. 
Now for integers 
$1 \le i\le n$ and  $0\le k \le n-i$, the \textit{substring} 
$\mathbf{m}[i, i + k]$ of $\mathbf{m}$ is defined as 
$\mathbf{m}[i, i + k ] := 
(\mathbf{m}(i), \mathbf{m}(i+1), \dots \mathbf{m}(i+k))$. 
Given a substring $\mathbf{m}[i,j]$ of $\mathbf{m}$, 
the \textit{length} $|\mathbf{m}[i,j]|$ of $\mathbf{m}[i,j]$ 
is defined as $|\mathbf{m}[i,j]| := j-i+1$. 
As an example, if $\mathbf{m}' : = (1,2,2,4,2,4,3,1,3)$, 
then $\mathbf{m}'[3,6] = (2,4,2,4)$ is an alternating 
substring of $\mathbf{m}'$ of length $4$.

\begin{definition}
Next define
 \begin{eqnarray*}
 E(\mathbf{m}) := 
\{ &&\phi(i,j) \in T_n \backslash D(\mathbf{m})  \; : \; \\
&& i < j \text { and there exists } \; k \in [i,j-2] \text{ s.t. } \\
&& (\phi(j,k) \in T_n\backslash D(\mathbf{m})) \\
 \text{ and } &&
(\mathbf{m}\cdot \phi(i,j) = \mathbf{m}\cdot\phi(j,k)) \; \; \;  \}.
\end{eqnarray*}

We call $E(\mathbf{m})$ the 
\textbf{alternating duplicate translocation set} for $\mathbf{m}$ 
because it is only nonempty when 
$\mathbf{m}$ contains an alternating substring 
of length at least $4$. 
For each $i \in [n],$  also define 
$E_i(\mathbf{m}) := 
 \{\phi(i,j) \in E(\mathbf{m}) \;:\; j \in [n]\}.$ 
\end{definition}
In the example of $\mathbf{m}' : = (1,2,2,4,2,4,3,1,3)$ above, 
$\mathbf{m}^* \cdot \phi(2,6) = 
\mathbf{m}' \cdot \phi(6,3)$ and 
$\phi(2,6),\phi(6,3) \in T_9 \backslash D(\mathbf{m}')$, 
implying that  $\phi(2,6) \in E(\mathbf{m}')$. 
In fact, it can easily be shown that 
$E(\mathbf{m}') = \{ \phi(2,6)\}$.

\begin{lemma}\label{remark2}
Let $i \in [n]$. 
Then 
$E_i(\mathbf{m}) \ne \varnothing$ 
if and only if \\ 
1) $\mathbf{m}(i) \ne \mathbf{m}(i-1)$ \\
2) There exists $j \in [i+1,n]$ and $k \in [i,j-2]$ such that 

i) For all $p \in [i,k-1]$, $\mathbf{m}(p) = \mathbf{m}(p+1)$ 

ii) $\mathbf{m}[k,j]$ is alternating 

iii) $|\mathbf{m}[k,j]| \ge 4$. 
\end{lemma}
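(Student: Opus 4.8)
The plan is to unwind the three defining conditions of membership $\phi(i,j) \in E(\mathbf{m})$ into explicit relations among the entries $\mathbf{m}(i-1), \mathbf{m}(i), \dots, \mathbf{m}(j)$, and then match those relations against conditions (1) and (2.i)--(2.iii). The central computation is to write out, position by position, the two tuples $\mathbf{m}\cdot\phi(i,j)$ (which moves the entry at position $i$ rightward to position $j$) and $\mathbf{m}\cdot\phi(j,k)$ (which moves the entry at position $j$ leftward to position $k$), using the explicit form of $\phi$ together with $(\mathbf{m}\cdot\sigma)(p) = \mathbf{m}(\sigma(p))$. Comparing them entry by entry, I expect the equality $\mathbf{m}\cdot\phi(i,j) = \mathbf{m}\cdot\phi(j,k)$ to split into four groups: (a) $\mathbf{m}(p) = \mathbf{m}(p+1)$ for $p \in [i, k-1]$; (b) $\mathbf{m}(k+1) = \mathbf{m}(j)$; (c) $\mathbf{m}(p-1) = \mathbf{m}(p+1)$ for $p \in [k+1, j-1]$; and (d) $\mathbf{m}(i) = \mathbf{m}(j-1)$.

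For the forward direction, I would read condition (1) together with $\mathbf{m}(i) \ne \mathbf{m}(j)$ directly off $\phi(i,j) \in T_n \setminus D(\mathbf{m})$, and the inequalities $\mathbf{m}(j) \ne \mathbf{m}(k)$ and $\mathbf{m}(j) \ne \mathbf{m}(j-1)$ off $\phi(j,k) \in T_n \setminus D(\mathbf{m})$. Group (a) is exactly (2.i), forcing $\mathbf{m}(i) = \dots = \mathbf{m}(k) =: a$. Group (c) forces the entries of $\mathbf{m}[k,j]$ to be constant on even offsets from $k$ and constant on odd offsets, say $b := \mathbf{m}(k+1)$, which is the ``alternating shape''. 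The key step is to upgrade this to genuine alternation and pin down the length: the inequality $\mathbf{m}(j) \ne \mathbf{m}(j-1)$ supplies $a \ne b$, since $\mathbf{m}(j)$ and $\mathbf{m}(j-1)$ are adjacent in the alternating shape, giving (2.ii); then group (b) combined with $a \ne b$ forces $j-k$ to be odd, and since $k \le j-2$ already gives $j-k \ge 2$, oddness sharpens this to $j-k \ge 3$, i.e. $|\mathbf{m}[k,j]| \ge 4$, which is (2.iii).

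For the converse, given (1) and a pair $j,k$ realizing (2.i)--(2.iii), I would not reuse $j$ but instead exhibit a concrete witness: take $J := k+3$ and $K := k$. Since $|\mathbf{m}[k,j]| \ge 4$ we have $k+3 \le j$, so $\mathbf{m}[k, k+3] = (a,b,a,b)$ with $a \ne b$, and one checks directly that $\phi(i,J), \phi(J,K) \in T_n \setminus D(\mathbf{m})$ and that all four groups (a)--(d) hold for this choice; hence $\mathbf{m}\cdot\phi(i,J) = \mathbf{m}\cdot\phi(J,K)$ and $\phi(i,J) \in E_i(\mathbf{m})$. Throughout, the $i=1$ edge case is already excluded by the $D(\mathbf{m})$ definition, so condition (1) should be read as also encoding $i \ne 1$, and the degenerate subcase $k=i$ (where group (a) is vacuous) needs only a sentence.

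I expect the main obstacle to be the bookkeeping in the entry-by-entry comparison, in particular correctly handling the boundary positions $p=k$ and $p=j$ where the two insertion/deletion shifts interact, and conceptually recognizing that the parity constraint ``$j-k$ odd'' is precisely what converts the bare alternating shape from group (c) into an alternating substring of even length at least $4$. The converse's freedom to replace $j$ by $k+3$ is what reconciles the length-at-least-$4$ hypothesis with the parity that group (b) demands.
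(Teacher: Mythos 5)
Your proof is correct and takes essentially the same route as the paper's: the paper likewise unwinds the equality $\mathbf{m}\cdot\phi(i,j)=\mathbf{m}\cdot\phi(j,k)$ position by position (it records exactly your groups (a) and (c)), uses $\phi(j,k)\notin D(\mathbf{m})$ to rule out the all-constant degeneration of the alternating shape, and uses the identical witness pair $\phi(i,k+3)$, $\phi(k+3,k)$ in the converse. The only difference is cosmetic: where you obtain $|\mathbf{m}[k,j]|\ge 4$ from the parity forced by your group (b), the paper excludes $|\mathbf{m}[k,j]|=3$ by comparing the two tuples at position $j$, i.e.\ by your group (d); both are valid.

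One correction, though: your reading of the $i=1$ case is the opposite of what the paper intends. The clause ``or $i=1$'' in the definition of $D(\mathbf{m})$ is a typo; the intended second condition is ``$\mathbf{m}(i)=\mathbf{m}(i-1)$ and $i\ne 1$,'' so that $\phi(1,j)$ is a duplicate only when $\mathbf{m}(1)=\mathbf{m}(j)$. This reading is forced elsewhere in the paper: the stated count of $|D(\mathbf{m}_\sigma^r)|$ has position $i=1$ contributing $r-1$ duplicates rather than $n-1$, and Lemmas \ref{E*} and \ref{oddeven} require $E_1(\mathbf{m})\ne\varnothing$ for an alternating $\mathbf{m}$ of even length (e.g.\ $\phi(1,4)\in E_1(\mathbf{m})$ for $\mathbf{m}=(1,2,1,2)$, which also underlies the remark preceding Lemma \ref{E*}). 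So condition (1) of the lemma is to be read as vacuously true at $i=1$, not as encoding $i\ne 1$, and $E_1(\mathbf{m})$ need not be empty. This costs you nothing: your forward and converse arguments apply verbatim at $i=1$, because the only consequence of $\phi(1,j)\notin D(\mathbf{m})$ you ever invoke is $\mathbf{m}(1)\ne\mathbf{m}(j)$, the second $D$-clause being simply absent there.
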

\begin{proof}
Let $i \in [n]$. 
We will first assume 1) and 2) in the lemma statement 
and show that $E_i(\mathbf{m})$ is not empty. 
Suppose $\mathbf{m}(i) \ne \mathbf{m}(i-1)$, and 
that there exists $j \in [i+1,n]$ and $k \in [i,j-2]$ such 
that for all $p \in [i,k-1]$, we have 
$\mathbf{m}(p) = \mathbf{m}(p+1)$. Suppose also 
that $\mathbf{m}[k,j]$ is alternating with 
$|\mathbf{m}[k,j]| \ge 4.$ 

For ease of notation, let 
$a := \mathbf{m}(k) = \mathbf{m}(k+2)$ and 
$b := \mathbf{m}(k+1) = \mathbf{m}(k+3)$ 
so that $\mathbf{m}[k,k+3] = (a,b,a,b) \in \mathbb{Z}^4$.  
Then 
\begin{align*}
(\mathbf{m} \cdot \phi(i,k+3))[k,k+3] = & 
~ (\mathbf{m}\cdot \phi(k,k+3))[k,k+3] \\ 
= & ~ (b,a,b,a) \\
= & ~ (\mathbf{m}\cdot \phi(k+3,k))[k,k+3].
\end{align*} 
Moreover, for all $p \notin [k,k+3],$ we have 
$(\mathbf{m}\cdot\phi(i,k+3))(p) = \mathbf{m}(p) 
=  (\mathbf{m}\cdot\phi(k+3,k))(p)$. Therefore 
$\mathbf{m}\cdot\phi(i,k+3) = 
\mathbf{m}\cdot \phi(k+3,k)$.  Also notice 
that $\mathbf{m}(i) \ne \mathbf{m}(i-1)$ 
implies that $\mathbf{m}\phi(i,k+3) 
\notin D(\mathbf{m})$.  
Hence $\phi(i,k+3) \in E_i(\mathbf{m})$. 

We now prove the second half of the lemma. 
That is, we assume that $E_i(\mathbf{m}) 
\ne \varnothing$ and then show that 
1) and 2) necessarily hold. 
Suppose that $E_i(\mathbf{m})$ is nonempty. 
Then $\mathbf{m}(i) \ne \mathbf{m}(i-1)$, since 
otherwise there would not exist any 
$\phi(i,j) \in T_n \backslash D(\mathbf{m})$. 

Let $j \in [i+1,n]$ and $k \in [i,j-2]$ such that 
$\phi(j,k) \in T_n \backslash D(\mathbf{m}) 
\text{ and } \mathbf{m}\cdot\phi(i,j) = \mathbf{m}(j,k)$. 
Existence of such $j$, $k$, and $\phi(j,k)$ is 
guaranteed by definition of $E_i(\mathbf{m})$ and 
the fact that $E_i(\mathbf{m})$ 
was assumed to be nonempty. 
Then for all $p \in [i,k-1]$, we have 
$\mathbf{m}(p) = \mathbf{m}(p+1)$ and 
for all $p \in [k,j-2]$, we have 
$\mathbf{m}(p) = \mathbf{m}(p+2)$. 
Hence either $\mathbf{m}[k,j]$ is 
alternating, or else 
for all $p,q \in [k,j]$, we have 
$\mathbf{m}(p) = \mathbf{m}(q)$. 
However, the latter case is impossible, 
since it would imply that for all 
$p,q \in [i,j]$ that $\mathbf{m}(p) = \mathbf{m}(q)$, 
which would mean $\phi(j,k) \notin 
T_n \backslash D(\mathbf{m})$, a contradiction. 
Therefore $\mathbf{m}[k,j]$ is alternating. 

It remains only to show that $|\mathbf{m}[k,j]| \ge 4$. 
Since $k \in [i,j-2],$ it must be the case that 
$|\mathbf{m}[k,j]| \ge 3$. However, if 
$|\mathbf{m}[k,j]| = 3$ (which occurs when $k = j-2$), 
then $(\mathbf{m}\cdot\phi(i,j))(j) 
= \mathbf{m}(i) = \mathbf{m}(k) \ne \mathbf{m}(k+1) = 
(\mathbf{m}\cdot\phi(j,k)(j)$, 
which implies that 
$\mathbf{m}\cdot\phi(i,j) \ne \mathbf{m}\cdot\phi(j,k)$, 
a contradiction. 
Hence $|\mathbf{m}[k,j]| \ge 4$. 
\end{proof}

One implication of Lemma \ref{remark2} is that 
there are only two possible forms for 
$\mathbf{m}[i,j]$ where $\phi(i,j) \in E_i(\mathbf{m})$. 
The first possibility is that  
$\mathbf{m}[i,j]$ is an alternating substring 
of the form 
$(a,b,a,b,\dots,a,b)$ 
(here $a,b\in \mathbb{Z}$), 
so that $\mathbf{m}[i,j] \cdot \phi(i,j)$ 
is of the form $(b,a,b,a\dots,b,a)$. 
In this case, as long as $\mathbf{m}[i,j]| \ge 4$, 
then setting $k=i$ implies that 
$k \in [i,j-2]$, that 
$\phi(j,k) \in T_n\backslash D(\mathbf{m})$, and that 
$\mathbf{m}[i,j]\cdot\phi(i,j) = 
\mathbf{m}[i,j]\cdot\phi(j,k)$. 

The other possibility is that $\mathbf{m}[i,j]$ 
is of the form 
$(\underset{k}{\underbrace{a,a,a,\dots,a}},
\underset{n-k}{\underbrace{b,a,b,\dots,a,b}})$ 
(again $a,b\in \mathbb{Z}$), so that 
$\mathbf{m}[i,j]\cdot\phi(i,j)$ is of the form 
$(\underset{k-1}{\underbrace{a,\dots,a}},
\underset{n-k+1}{\underbrace{b,a,b,\dots,b,a}})$. 
Again in this case, as long as $|\mathbf{m}[i,j]| \ge 4,$ then 
$k \in [i,j-2]$, 
$\phi(j,k) \in T_n\backslash D(\mathbf{m})$, and 
$\mathbf{m}[i,j]\cdot\phi(i,j) = 
\mathbf{m}[i,j]\cdot\phi(j,k)$.

\begin{remark}\label{rem2}
If \\
1) $\mathbf{m}$ is alternating and \\
2) $n$ is even 

then $\mathbf{m}\cdot\phi(1,n) 
= \mathbf{m}\cdot\phi(n,1)$. 
\end{remark}

\begin{remark}\label{rem3}
If \\
1) $\mathbf{m}$ is alternating \\
2) $n \ge 3$ \\
3) $n$ is odd, 

then $\mathbf{m}\cdot\phi(1,n) \ne \mathbf{m}\cdot\phi(n,1)$. 
\end{remark}

To calculate $|E(\mathbf{m}_\sigma^r|$, we define a 
set of equal size that is easier to calculate. 

\begin{definition}
Define 
\begin{eqnarray*}
E^*(\mathbf{m}) := \{\; \; \; (i,j) \in [n]\times [n] &:&
(\mathbf{m}[i,j] \text{ is alternating}) \\ 
&&\text{ and } (|\mathbf{m}[i,j]| \ge 4) \\
&&\text{ and } (|\mathbf{m}[i,j]| \text{ is even})\; \; \; \}.
 \end{eqnarray*}
For each $i \in [n]$, also define 
$E_i^*(\mathbf{m}) := 
\{(i,j) \in E^*(\mathbf{m}) \;:\; j \in [n]\}.$ 
Notice that $E^*(\mathbf{m})
= \underset{i \in [n]}{\bigcup} E_i^*(\mathbf{m})$. 
\end{definition}

\begin{lemma}\label{E*}
$|E(\mathbf{m})| = |E^*(\mathbf{m})|$
\end{lemma}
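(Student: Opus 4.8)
The plan is to prove the equality by exhibiting an explicit bijection $\Phi \colon E(\mathbf{m}) \to E^*(\mathbf{m})$ that sends each alternating duplicate translocation to the pair recording the alternating block responsible for the duplication. Given $\phi(i,j) \in E(\mathbf{m})$, the defining property supplies a $k \in [i,j-2]$ with $\phi(j,k) \in T_n \backslash D(\mathbf{m})$ and $\mathbf{m}\cdot\phi(i,j) = \mathbf{m}\cdot\phi(j,k)$, and I would set $\Phi(\phi(i,j)) := (k,j)$. Note that $\Phi$ preserves the second coordinate $j$, so it will restrict to a bijection on each fiber; this is the natural pairing, and it is \emph{not} index-preserving in the first coordinate, which is why a term-by-term comparison of $E_i(\mathbf{m})$ with $E_i^*(\mathbf{m})$ would fail.

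The core of the argument is to read off the structure of $\mathbf{m}[i,j]$ forced by the equation $\mathbf{m}\cdot\phi(i,j)=\mathbf{m}\cdot\phi(j,k)$. Comparing the two tuples coordinate by coordinate, exactly as in the proof of Lemma \ref{remark2}, forces $\mathbf{m}[i,k]$ to be constant with some value $a$, and forces $\mathbf{m}(k)=\mathbf{m}(k+2)=\cdots$ and $\mathbf{m}(k+1)=\mathbf{m}(k+3)=\cdots$ together with the boundary identities $\mathbf{m}(j-1)=\mathbf{m}(i)=a$ and $\mathbf{m}(j)=\mathbf{m}(k+1)$. Writing $b:=\mathbf{m}(k+1)$, the condition $\phi(j,k)\notin D(\mathbf{m})$ gives $\mathbf{m}(j)\neq\mathbf{m}(k)$, hence $a\neq b$ and $\mathbf{m}[k,j]$ is alternating. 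Since $\mathbf{m}(j)=b$ occupies an odd offset from $k$, the length $j-k+1$ is even (this is the substring analogue of Remarks \ref{rem2} and \ref{rem3}), and $k\le j-2$ makes it at least $4$. Thus $(k,j)\in E^*(\mathbf{m})$. Moreover $k$ is the unique witness, being exactly the last index of the constant prefix of $\mathbf{m}[i,j]$, since any witness must make $\mathbf{m}[i,k]$ constant while $\mathbf{m}(k+1)\neq\mathbf{m}(k)$. Hence $\Phi$ is well defined.

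For the inverse I would send $(k,j)\in E^*(\mathbf{m})$ to $\phi(i,j)$, where $i$ is the left endpoint of the maximal constant run ending at $k$ (so $i=1$ or $\mathbf{m}(i)\neq\mathbf{m}(i-1)$). With $a:=\mathbf{m}(k)$ and $b:=\mathbf{m}(k+1)$, the run gives $\mathbf{m}(i)=a\neq b=\mathbf{m}(j)$, so $\phi(i,j)\in T_n\backslash D(\mathbf{m})$, and the same $k$ witnesses $\phi(i,j)\in E(\mathbf{m})$ because $\mathbf{m}[i,k]$ is constant and $\mathbf{m}[k,j]$ is alternating of even length, making the equation hold and $\phi(j,k)\in T_n\backslash D(\mathbf{m})$. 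It is essential here that the boundary case $i=1$ is allowed in $T_n\backslash D(\mathbf{m})$, which is precisely why translocations $\phi(1,j)$ can belong to $E(\mathbf{m})$. The two assignments are mutually inverse: $\Phi$ recovers the constant-prefix endpoint $k$ from $\phi(i,j)$, while the inverse recovers the run start $i$ from $k$, and each composite is the identity because for $\phi(i,j)\in E(\mathbf{m})$ the index $i$ is already the start of the maximal run containing $k$. Therefore $\Phi$ is a bijection and $|E(\mathbf{m})|=|E^*(\mathbf{m})|$.

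The step I expect to be most delicate is the coordinate comparison establishing the constant-prefix-plus-even-alternating structure, and in particular pinning down that the alternating block must have \emph{even} length; much of this is already available from Lemma \ref{remark2} and the two normal forms for $\mathbf{m}[i,j]$ recorded after it, so the remaining work is mainly to assemble them into a clean bijection. Extra care is needed at the boundary $i=1$, both to verify that the inverse map outputs a legitimate element of $E(\mathbf{m})$ and to confirm uniqueness of the witness $k$.
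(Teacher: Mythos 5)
Your proof is correct and takes essentially the same route as the paper: your witness assignment $\phi(i,j)\mapsto(k,j)$ and its inverse (recovering $i$ as the start of the maximal constant run ending at $k$) are exactly the paper's index maps $f$ and $g$, which send an index to the end of its constant run, i.e.\ to the start of the nearest alternating block, so that the fiber $E_i(\mathbf{m})$ is matched with $E^*_{f(i)}(\mathbf{m})$ just as in the paper. The only difference is packaging: you assemble a single explicit bijection with a verified inverse instead of proving the two inequalities separately, and your remark that the pairing cannot be index-preserving in the first coordinate is well taken, since the paper's intermediate line ``$|E_i(\mathbf{m})| \le |E_i^*(\mathbf{m})|$'' is only valid after reindexing by $f$, a point your formulation handles more cleanly.
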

\begin{proof}
The idea of the proof is simple. 
Each element $\phi(i,j)\in E(\mathbf{m})$ 
involves exactly one alternating sequence 
of length greater or equal to $4$, so the set
sizes must be equal. We formalize the 
argument by showing that 
$|E(\mathbf{m})| \le |E^*(\mathbf{m})|$ 
and then that 
$|E^*(\mathbf{m})| \le |E(\mathbf{m})|$.

To see why 
$|E(\mathbf{m})| \le |E^*(\mathbf{m})|$, 
we define a mapping 
$f : [n] \to [n]$, which maps 
index values either to the 
beginning of the nearest alternating 
subsequence to the right, or else 
to $n$. For all $i \in [n]$, let

\footnotesize
 \begin{align*}
f(i) :=
\begin{cases}
  i + \min\{p \in \mathbb{Z}_{\ge 0} \;:\; 
    (\mathbf{m}(i) \ne \mathbf{m}(i+p+1))
     \vee   (i+p = n) \}  \\
\hfill (\text{if } \mathbf{m}(i) \ne \mathbf{m}(i-1) \text{ or } i = 1) \\
 n
\hfill  \text{(otherwise)} 
  \end{cases} 
\end{align*}
\normalsize

Notice by definition of $f$, if 
$i,i' \in [n]$ such that $i \ne i'$, and 
if $\mathbf{m}(i) \ne \mathbf{m}(i-1)$ or $i=1$ 
and at the same time 
$\mathbf{m}(i') \ne \mathbf{m}(i'-1)$ or $i'=1$, 
then $f(i) \ne f(i')$. 

Now for each $i \in [n]$, 
if $\mathbf{m}(i) \ne \mathbf{m}(i-1)$ or 
$i=1$, then 
$|E_i(\mathbf{m})|$ = $|E_{f(i)}^*(\mathbf{m})|$ 
by Lemma \ref{remark2} and the two previous remarks.
Otherwise, if
$\mathbf{m}(i) = \mathbf{m}(i-1)$, 
then 
$|E_i(\mathbf{m})| = |E_{f(i)}^*(\mathbf{m})| = 0$. 
Therefore 
$|E_i(\mathbf{m})| \le |E_i^*(\mathbf{m})|$. 
This is true for all $i \in [n]$, so 
$|E(\mathbf{m})| \le |E^*(\mathbf{m})|$.

The argument to show that 
$|E^*(\mathbf{m})| \le |E(\mathbf{m})|$ 
is similar, except it uses the following 
function $g : [n] \to [n]$ instead of $f$.  
For all $i \in [n]$, let 

\footnotesize
 \begin{align*}
g(i) :=
\begin{cases}
  i - \min\{p \in \mathbb{Z}_{\ge 0} \;:\; 
    (\mathbf{m}(i) \ne \mathbf{m}(i-p-1))
     \vee   (i-p = 1) \}  \\
\hfill (\text{if } \mathbf{m}(i) \ne \mathbf{m}(i-1) \text{ or } i = n) \\
 n
\hfill  \text{(otherwise)} 
  \end{cases} 
\end{align*}

\normalsize

\end{proof}

By definition, calculating $|E^*(\mathbf{m})|$ equates 
to calculating the number of alternating substrings 
$\mathbf{m}[i,j]$ of $\mathbf{m}$ such that the length 
of the substring is both even and longer than 4. The 
following lemma helps to simplify this calculation further. 

\begin{lemma}\label{oddeven}
Let $\mathbf{m}$ be an alternating string. Then 
\begin{align*}
&\text{1) If $n$ is even then } |E^*(\mathbf{m})| = 
\left(\frac{n-2}{2} \right)^2\\ 
&\text{2) If $n$ is odd then } |E^*(\mathbf{m})| = 
\left(\frac{n-3}{2}\right)\left(\frac{n-1}{2}\right)
\end{align*}
\end{lemma}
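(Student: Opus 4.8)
The plan is to exploit the fact that every contiguous substring of an alternating string is itself alternating, which collapses the definition of $E^*(\mathbf{m})$ to a pure counting problem on lengths. Since $\mathbf{m}$ is assumed to be alternating, say $\mathbf{m} = (a,b,a,b,\dots)$ with $a \ne b$, any substring $\mathbf{m}[i,j]$ with $i \le j$ begins with either $a$ (when $i$ is odd) or $b$ (when $i$ is even) and thereafter alternates between the same two distinct values. Hence $\mathbf{m}[i,j]$ is alternating for every pair $i \le j$, so the first condition in the definition of $E^*(\mathbf{m})$ is automatically satisfied. Consequently $E^*(\mathbf{m})$ reduces to the set of pairs $(i,j)$ with $i \le j$ for which the length $|\mathbf{m}[i,j]| = j - i + 1$ is both even and at least $4$.

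First I would group the admissible pairs by their common length $\ell := j - i + 1$. For a fixed $\ell$ with $1 \le \ell \le n$, the number of substrings of $\mathbf{m}$ of length exactly $\ell$ equals $n - \ell + 1$, since the starting index $i$ ranges over $1, 2, \dots, n-\ell+1$. The admissible lengths are the even integers $\ell$ with $4 \le \ell \le n$, so I would write
\[
|E^*(\mathbf{m})| = \sum_{\substack{\ell \text{ even} \\ 4 \le \ell \le n}} (n - \ell + 1).
\]
Substituting $\ell = 2m$ turns this into $\sum_m (n + 1 - 2m)$, an arithmetic progression whose index range depends only on the parity of $n$.

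Next I would evaluate the sum separately in the two cases. When $n$ is even, $m$ runs from $2$ to $n/2$; when $n$ is odd, the largest even length not exceeding $n$ is $n-1$, so $m$ runs from $2$ to $(n-1)/2$. In each case the resulting arithmetic sum is routine to evaluate in closed form, and simplification yields $(n-2)^2/4$ in the even case and $(n-3)(n-1)/4$ in the odd case, matching the claimed expressions $\left(\frac{n-2}{2}\right)^2$ and $\left(\frac{n-3}{2}\right)\left(\frac{n-1}{2}\right)$.

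The entire conceptual content lies in the opening observation that substrings of an alternating string remain alternating; once this is recorded, no genuine obstacle remains. The only point requiring care is the bookkeeping of the summation index as a function of the parity of $n$ -- in particular, that the largest admissible even length is $n$ when $n$ is even but $n-1$ when $n$ is odd -- after which both closed forms drop out of a direct computation.
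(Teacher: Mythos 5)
Your proof is correct and follows essentially the same route as the paper's: both reduce $E^*(\mathbf{m})$ to counting pairs $(i,j)$ whose length $j-i+1$ is even and at least $4$ (using that substrings of an alternating string are alternating), group by length $k$ to get $n-k+1$ substrings each, and evaluate the resulting arithmetic sum $\sum (n-k+1)$ over even $k \in [4,n]$ separately by parity of $n$. No gaps; the bookkeeping matches the paper's computation exactly.
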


\begin{proof}
Assume $\mathbf{m}$ is an alternating string. 
By Lemma \ref{E*}, 
$|E(\mathbf{m})| = |E^*(\mathbf{m})| =
|\underset{i\in[n]}{\bigcup} E_i^*(\mathbf{m})|$. 
Since $\mathbf{m}$ was assumed to be alternating, 
\begin{eqnarray*}
& & |\underset{i \in [n]}{\bigcup} E_i^*(\mathbf{m})| \\
&=& |\{(i,j) \in [n] \times [n] \;:\; |\mathbf{m}[i,j]| \ge 4 
\text{ and } |\mathbf{m}[i,j]| \text{ is even} \}| \\
&=& |\{(i,j) \in [n]\times[n] \;:\; j-i+1 \in K\}|,
\end{eqnarray*}
where $K$ is the set of even integers between 
$4$ and $n$, i.e. 
$K := \{k \in [4,n] \;:\; k \text{ is even}\}$. 
For each $k \in K$, we have 
\begin{eqnarray*} 
& &|\{(i,j) \in [n] \times [n] \;:\; j-i+1 = k\}| \\
&=&|\{i \in [n] \;:\; i \in [1,n-k+1]\}| \\
&=& n - k + 1. 
\end{eqnarray*}

Therefore 
$|E(\mathbf{m})| = \underset{k \in K}{\sum}(n-k+1)$. 
In the case that $n$ is even, then 
\begin{eqnarray*}
\underset{k \in K}{\sum}(n-k+1) 
&=& \underset{i=2}{\overset{n/2}{\sum}}(n-2i+1) \\ 
&=& \underset{i=1}{\overset{(n-2)/2}{\sum}}(2i-1) 
= \left(\frac{n-2}{2} \right)^2.
\end{eqnarray*}
In the case that $n$ is odd, then 
\begin{eqnarray*}
\underset{k \in K}{\sum}(n-k+1) 
&=& \underset{i=2}{\overset{(n-1)/2}{\sum}}(n-2i+1) \\ 
&=& \underset{i=1}{\overset{(n-3)/2}{\sum}}2i 
= \left(\frac{n-3}{2} \right)
\left(\frac{n-1}{2}\right).
\end{eqnarray*}

\end{proof}

Notice that by Lemma \ref{oddeven}, 
it suffices to calculate 
$|E(\mathbf{m})|$ for locally maximal length 
alternating substrings of $\mathbf{m}$. 
An alternating substring $\mathbf{m}[i,j]$ 
is of \textit{locally maximal length} if and only if 
1) $\mathbf{m}[i-1]$ is not alternating or $i=1$; and 
2) $\mathbf{m}[i,j+1]$ is not alternating or $j = n$.

Finally, we define the general set of duplications. The 
lemma that follows the definition also shows that 
removing the set $D^*(\mathbf{m}_\sigma^r)$ from $T_n$ 
removes all duplicate translocations 
associated with $\mathbf{m}_\sigma^r$. 
\begin{definition}[$D^*(\mathbf{m})$, duplication set]
Define 
\[D^*(\mathbf{m}) := D(\mathbf{m}) \cup E(\mathbf{m}).\]
We call $D^*(\mathbf{m})$ the \textbf{duplication 
set} for $\mathbf{m}$. For each $i \in [n]$, we also define 
$D_i^*(\mathbf{m}) := 
\{ \phi(i,j) \in D^*(\mathbf{m}) \; : \; j \in [n]\}$.
\end{definition}

\begin{lemma}\label{D*}
Let $\phi_1, \phi_2 \in T_n \backslash 
D^*(\mathbf{m}_\sigma^r) $. 
Then $\phi_1 = \phi_2$ if and only if  
$\mathbf{m}_\sigma^r\cdot \phi_1 = \mathbf{m}_\sigma^r \cdot \phi_2.$ 
\end{lemma}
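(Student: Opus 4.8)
The plan is to prove the nontrivial (reverse) implication by contradiction, the forward implication $\phi_1=\phi_2\implies \mathbf{m}_\sigma^r\cdot\phi_1=\mathbf{m}_\sigma^r\cdot\phi_2$ being immediate. So suppose $\phi_1=\phi(i_1,j_1)$ and $\phi_2=\phi(i_2,j_2)$ both lie in $T_n\setminus D^*(\mathbf{m}_\sigma^r)$, satisfy $\mathbf{m}_\sigma^r\cdot\phi_1=\mathbf{m}_\sigma^r\cdot\phi_2=:\mathbf{m}'$, yet $\phi_1\neq\phi_2$. If one of them equals $e$, then $\mathbf{m}'=\mathbf{m}_\sigma^r$ and the other fixes $\mathbf{m}_\sigma^r$; but a non-identity translocation fixing $\mathbf{m}_\sigma^r$ only moves an element inside a constant run, so it satisfies $\mathbf{m}_\sigma^r(i)=\mathbf{m}_\sigma^r(j)$ and lies in $D(\mathbf{m}_\sigma^r)$, a contradiction. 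Hence I may assume $\phi_1,\phi_2\neq e$. Since $\phi_1,\phi_2\in T_n\setminus D^*\subseteq T_n\setminus D(\mathbf{m}_\sigma^r)$, each is \emph{reduced} in the sense of Case I of Lemma \ref{D}: for $t\in\{1,2\}$ we have $\mathbf{m}_\sigma^r(i_t)\neq\mathbf{m}_\sigma^r(j_t)$, and $i_t=1$ or $\mathbf{m}_\sigma^r(i_t)\neq\mathbf{m}_\sigma^r(i_t-1)$, so $i_t$ begins its constant run.

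The main tool is to locate the leftmost and rightmost coordinates at which $\mathbf{m}'$ differs from $\mathbf{m}_\sigma^r$, computed from the fact that $\phi(i,j)$ replaces $\mathbf{m}_\sigma^r$ on $[\min(i,j),\max(i,j)]$ by a left cyclic shift when $i<j$ and by a right cyclic shift when $i>j$, leaving all other coordinates fixed. Using the reduced conditions I would record the following bookkeeping. For a forward reduced $\phi(i,j)$ (i.e. $i<j$), the rightmost changed coordinate is $j$, since $\mathbf{m}'(j)=\mathbf{m}_\sigma^r(i)\neq\mathbf{m}_\sigma^r(j)$ while coordinates past $j$ are untouched, and the leftmost changed coordinate is the right end $r$ of the constant run of value $\mathbf{m}_\sigma^r(i)$ starting at $i$. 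For a backward reduced $\phi(i,j)$ (i.e. $i>j$), the leftmost changed coordinate is $j$, since $\mathbf{m}'(j)=\mathbf{m}_\sigma^r(i)\neq\mathbf{m}_\sigma^r(j)$, and the rightmost changed coordinate is $i$, since $\mathbf{m}'(i)=\mathbf{m}_\sigma^r(i-1)\neq\mathbf{m}_\sigma^r(i)$ (here $i\geq2$ because $i>j\geq1$).

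I then split on the directions of $\phi_1$ and $\phi_2$. If both are backward, matching rightmost changed coordinates gives $i_1=i_2$ and matching leftmost ones gives $j_1=j_2$, so $\phi_1=\phi_2$. If both are forward, matching rightmost coordinates gives $j_1=j_2=:j$; then $\mathbf{m}'(j)=\mathbf{m}_\sigma^r(i_1)=\mathbf{m}_\sigma^r(i_2)$ shows the two moved values agree, and matching leftmost coordinates gives $r_1=r_2$, so both moved runs are the maximal constant run ending at this common coordinate. Since a maximal run is determined by its right end, $i_1=i_2$ and again $\phi_1=\phi_2$. Each outcome contradicts $\phi_1\neq\phi_2$.

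The remaining case is the crux: $\phi_1$ forward and $\phi_2$ backward (the opposite labelling is identical after renaming). Matching rightmost changed coordinates gives $i_2=j_1$, and matching leftmost changed coordinates gives $j_2=r_1$, whence $j_2\geq i_1$; moreover $\phi_2=\phi(j_1,j_2)\in T_n$ forces $j_1-j_2\neq1$, so $j_2\leq j_1-2$ and thus $j_2\in[i_1,j_1-2]$. Setting $k:=j_2$, I now have $i_1<j_1$, $k\in[i_1,j_1-2]$, $\phi(j_1,k)=\phi_2\in T_n\setminus D(\mathbf{m}_\sigma^r)$, and $\mathbf{m}_\sigma^r\cdot\phi(i_1,j_1)=\mathbf{m}_\sigma^r\cdot\phi(j_1,k)$ — exactly the defining conditions of $E(\mathbf{m}_\sigma^r)$. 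Hence $\phi_1\in E(\mathbf{m}_\sigma^r)\subseteq D^*(\mathbf{m}_\sigma^r)$, contradicting $\phi_1\in T_n\setminus D^*(\mathbf{m}_\sigma^r)$ and completing the proof. I expect the only delicate points to be the changed-coordinate bookkeeping of the second paragraph and the run-maximality step in the forward/forward case; the structural facts (Remarks \ref{rem2} and \ref{rem3}, Lemma \ref{remark2}) confirm that this mixed case is precisely the even-length alternating overlap, but are not strictly needed once the collision is matched directly against the definition of $E(\mathbf{m}_\sigma^r)$.
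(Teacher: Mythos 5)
Your proof is correct, and it reorganizes the argument in a genuinely different way from the paper. The paper proves the contrapositive by an exhaustive case analysis keyed to index coincidences: Case I ($i_1=i_2$) versus Case II ($i_1\neq i_2$), each subdivided according to whether $\mathbf{m}_\sigma^r(i_t)=\mathbf{m}_\sigma^r(i_t-1)$ and according to the relative positions of $j_1,j_2$ with respect to $i_1,i_2$; in each subcase it exhibits one explicit coordinate where $\mathbf{m}_\sigma^r\cdot\phi_1$ and $\mathbf{m}_\sigma^r\cdot\phi_2$ disagree, and only in the last subcase (when $j_1=i_2$ and $j_2<i_2$) does it fall back on membership in $D^*(\mathbf{m}_\sigma^r)$. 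You instead extract a single reusable invariant --- for a translocation in $T_n\setminus D(\mathbf{m}_\sigma^r)$, the leftmost and rightmost coordinates at which the image differs from $\mathbf{m}_\sigma^r$ are explicitly computable (forward: end of the constant run at $i$, and $j$; backward: $j$, and $i$) --- and then equate these invariants across the assumed collision. This collapses the paper's many subcases into three direction-based ones: two same-direction cases resolved by run-maximality, and the mixed forward/backward case, which you show lands exactly on the defining conditions of $E(\mathbf{m}_\sigma^r)$, yielding the contradiction. Your route is shorter and more conceptual: it makes transparent that $E(\mathbf{m}_\sigma^r)$ is precisely the set of forward/backward collisions, whereas in the paper this emerges only as the residual subcase; the paper's version, in exchange, is more elementary coordinate-chasing and never needs the leftmost/rightmost bookkeeping to be set up (though several of its subcases are dismissed as ``similar,'' which your uniform argument avoids). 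Both proofs rest on the same two pillars --- the reducedness conditions inherited from $T_n\setminus D(\mathbf{m}_\sigma^r)$ and the definition of $E(\mathbf{m}_\sigma^r)$ --- so the difference is one of decomposition, not of underlying substance; your handling of the identity case (a non-identity translocation fixing $\mathbf{m}_\sigma^r$ must move within a constant run, hence lies in $D(\mathbf{m}_\sigma^r)$) is also a clean replacement for the paper's Case IA, where both translocations are forced to equal $e$.
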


\begin{proof}
Let 
$\phi_1, \phi_2 \in T_n \backslash D^*(\mathbf{m}_\sigma^r)$. 
If $\phi_1 = \phi_2$ then
$\mathbf{m}_\sigma^r \cdot \phi_1 = \mathbf{m}_\sigma^r \cdot \phi_2$ 
trivially. It remains to prove that 
$\mathbf{m}_\sigma^r \cdot \phi_1 = \mathbf{m}_\sigma^r \cdot \phi_2 
\implies \phi_1 = \phi_2.$ We proceed by contrapositive. 
Suppose that $\phi_1 \ne \phi_2.$  
We want to show that 
$\mathbf{m}_\sigma^r\cdot \phi_1 \ne \mathbf{m}_\sigma^r\phi_2.$ 
Let $\phi_1 := \phi(i_1,j_1)$ and $\phi_2 := \phi(i_2, j_2)$.  
The remainder of the proof can be split into two main cases: 
Case I is if $i_1 = i_2$ and Case II is if $i_1 \ne i_2.$

Case I (when $i_1 = i_2$), can be further divided into two subcases: 
\small
\begin{flalign*}
&& \text{Case IA: }&
\mathbf{m}_\sigma^r(i_1) = 
\mathbf{m}_\sigma^r(i_1 -1)&&\\
&&  \text{Case IB: } &
 \mathbf{m}_\sigma^r(i_1) \ne
\mathbf{m}_\sigma^r(i_1 -1). &&
\end{flalign*}
\normalsize

Case IA is easy to prove.  We have  
$D_{i_1}^*(\mathbf{m}_\sigma^r) = 
D_{i_2}^*(\mathbf{m}_\sigma^r) = 
\{\phi(i_1,j) \in T_n\backslash \{e\} \;:\; j \in [n]\}$, 
so $\phi_1 = e = \phi_2,$ a contradiction.  
For Case IB, we can first assume without loss of 
generality that $j_1 < j_2$ and then split 
into the following smaller subcases:  
\small
\begin{align*}
&&\text{i) }& (j_1 < i_1) \text{ and } (j_2 > i_1)&& \\
&&\text{ii) }& (j_1 < i_1) \text{ and } (j_2 \le i_1)&& \\
&&\text{iii) }& (j_1 > i_1) \text{ and } (j_2 > i_1)&& \\ 
&&\text{iv) }& (j_1 > i_1) \text{ and } (j_2 \le i_1).&&
\end{align*}
\normalsize

However, subcase iv) is unnecessary since 
it was assumed that $j_1 < j_2,$ 
so $j_1 > i_1 \implies j_2 > j_1 > i_1.$  
Subcase ii) can also be reduced to 
$(j_1 < i_1) \text{ and } (j_2 < i_1)$ since 
$j_2 \ne i_2 = i_1.$  
Each of the remaining subcases is proven 
by noting that there is some element in the 
multipermutation $\mathbf{m}_\sigma^r \cdot \phi_1$ that is 
necessarily different from $\mathbf{m}_\sigma^r\cdot \phi_2.$ 
For example, in subcase i), we have 
$\mathbf{m}_\sigma^r\cdot \phi_1(j_1) = 
\mathbf{m}_\sigma^r(i_1) \ne 
\mathbf{m}_\sigma^r(j_1) = \mathbf{m}_\sigma^r\cdot\phi_2(j_1).$ 
Subcases ii) and iii) are solved similarly. \\

Case II (when $i_1 \ne i_2$) can be divided into three subcases: 
\small
\begin{flalign*}
&&\text{Case}& \text{ IIA: }
(\mathbf{m}_\sigma^r(i_1) = \mathbf{m}_\sigma^r(i_1 -1) 
\text{ and } 
\mathbf{m}_\sigma^r(i_2) = \mathbf{m}_\sigma^r(i_2-1)), &&\\ 
&&\text{Case}& \text{ IIB: either } && \\
&& &(\mathbf{m}_\sigma^r(i_1) = \mathbf{m}_\sigma^r(i_1 -1) 
\text{ and } 
\mathbf{m}_\sigma^r(i_2) \ne \mathbf{m}_\sigma^r(i_2-1)) &&\\
&& & \text{ \hspace{-6.8mm} or } 
(\mathbf{m}_\sigma^r(i_1) \ne \mathbf{m}_\sigma^r(i_1 -1) 
\text{ and } 
\mathbf{m}_\sigma^r(i_2) = \mathbf{m}_\sigma^r(i_2-1)), &&\\ 
&&\text{Case}& \text{ IIC: }
(\mathbf{m}_\sigma^r(i_1) \ne \mathbf{m}_\sigma^r(i_1 -1) 
\text{ and } 
\mathbf{m}_\sigma^r(i_2) \ne \mathbf{m}_\sigma^r(i_2-1)).  &&
\end{flalign*}
\normalsize

Case IIA is easily solved by mimicking the proof of Case IA.  
Case IIB is also easily solved as follows.  
First, without loss of generality, 
we assume that 
$\mathbf{m}_\sigma^r(i_1) = \mathbf{m}_\sigma^r(i_1-1) \text{ and } 
\mathbf{m}_\sigma^r(i_2) \ne \mathbf{m}_\sigma^r(i_2-1).$  
Then $D_{i_1}^*(\mathbf{m}_\sigma^r) = 
\{\phi(i_1,j)\in T_n \backslash \{e\} \; : \; 
j \in [n]\},$ so $\phi_1 = e.$  
Therefore we have 
$\mathbf{m}_\sigma^r\cdot \phi_1(j_2) = \mathbf{m}_\sigma^r(j_2) 
\ne \mathbf{m}_\sigma^r(i_2) = \mathbf{m}_\sigma^r\phi_2(i_2-1).$ \\

Finally, for Case IIC, without loss of generality we may 
assume that $i_1 < i_2$ and then split into the following four subcases: 
\small
\begin{align*}
&&\text{i) } &(j_1 < i_2) \text{ and } (j_2 \ge i_2) &&\\
&&\text{ii) } &(j_1 < i_2) \text{ and } (j_2 < i_2) &&\\
&&\text{iii) } &(j_1 \ge i_2) \text{ and } (j_2 \ge i_2) &&\\
&&\text{iv) } &(j_1 \ge i_2) \text{ and } (j_2 < i_2). && 
\end{align*}. 
\normalsize

However, since 
$\phi(i_2,j_2) \in T_n\backslash D^*(\mathbf{m}_\sigma^r)$ 
implies $i_2 \ne j_2,$ 
subcases i) and iii) can be reduced to 
$(j_1 < i_2) \text{ and } (j_2 > i_2)$ and 
$(j_1 \ge i_2) \text{ and } (j_2 > i_2)$ respectively.  
For subcase i), we have 
$\mathbf{m}_\sigma^r\cdot \phi_1(j_1) = \mathbf{m}_\sigma^r(i_1) 
\ne \mathbf{m}_\sigma^r(j_1) = \mathbf{m}_\sigma^r \cdot \phi_2(j_1).$ 
Subcases ii) and iii) are solved in a similar manner.  
For subcase iv), 
if $j_1 > i_2$, then $\mathbf{m}_\sigma^r \cdot \phi_1(j_1) = 
\mathbf{m}_\sigma^r (i_1) \ne \mathbf{m}_\sigma^r(j_1) = 
\mathbf{m}_\sigma^r \cdot \phi_2(j_1).$  
Otherwise, if $j_1 = i_2$, then 
$\phi_1 = \phi(i_1, i_2)$ and 
$\phi_1 = \phi(i_2, j_2).$ 
Thus if $\mathbf{m}_\sigma^r\cdot \phi_1 = 
\mathbf{m}_\sigma^r \cdot \phi_2$ then 
$\phi_1 \in D^*_{i_1}(\mathbf{m}_\sigma^r),$ 
which implies that 
$\phi_1 \notin T_n \backslash D^*(\mathbf{m}_\sigma^r)$, 
a contradiction. 
\end{proof}

Lemma \ref{D*} implies that we can 
calculate $r$-regular Ulam sphere sizes 
of radius $1$ whenever we can calculate the 
appropriate duplication set. This calculation 
can be simplified by noting that for a sequence 
$\mathbf{m} \in \mathbb{Z}^n$ that 
$D(\mathbf{m}) \cap E(\mathbf{m}) = \varnothing$ 
(by the definition of $E(\mathbf{m}))$ 
and then decomposing the duplication set into these 
components. 
This idea is stated in the next theorem

\begin{theorem}\label{ballcalc}
$|S(\mathbf{m}_\sigma^r,1)| 
= 1 + (n-1)^2  - |D(\mathbf{m}_\sigma^r)| - |E(\mathbf{m}_\sigma^r)|.$
\end{theorem}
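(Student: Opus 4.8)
The plan is to count the sphere by exhibiting a bijection between $S(\mathbf{m}_\sigma^r,1)$ and the set $T_n \backslash D^*(\mathbf{m}_\sigma^r)$, so that the cardinality reduces to $|T_n| - |D^*(\mathbf{m}_\sigma^r)|$. Since the earlier lemma establishes $|T_n| = 1 + (n-1)^2$, and the definition of $E(\mathbf{m})$ guarantees $D(\mathbf{m}_\sigma^r) \cap E(\mathbf{m}_\sigma^r) = \varnothing$, we would then have $|D^*(\mathbf{m}_\sigma^r)| = |D(\mathbf{m}_\sigma^r)| + |E(\mathbf{m}_\sigma^r)|$, and the claimed formula follows at once.

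The first step is to upgrade Lemma \ref{D}, which expresses the sphere as $\{\mathbf{m}_\sigma^r \cdot \phi : \phi \in T_n \backslash D(\mathbf{m}_\sigma^r)\}$, to the analogous statement with $D$ replaced by $D^* = D \cup E$. Since $T_n \backslash D^*(\mathbf{m}_\sigma^r) \subseteq T_n \backslash D(\mathbf{m}_\sigma^r)$, the inclusion $\{\mathbf{m}_\sigma^r \cdot \phi : \phi \in T_n \backslash D^*(\mathbf{m}_\sigma^r)\} \subseteq S(\mathbf{m}_\sigma^r,1)$ is automatic. For the reverse inclusion I would check that deleting the $E$-translocations loses no image: each $\phi(i,j) \in E(\mathbf{m}_\sigma^r)$ has $i < j$ and, by the definition of $E$, comes equipped with an index $k \in [i,j-2]$ and a translocation $\phi(j,k) \in T_n \backslash D(\mathbf{m}_\sigma^r)$ satisfying $\mathbf{m}_\sigma^r \cdot \phi(i,j) = \mathbf{m}_\sigma^r \cdot \phi(j,k)$. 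Because $\phi(j,k)$ has first index larger than its second, whereas every member of $E$ has first index strictly smaller than its second, we get $\phi(j,k) \notin E(\mathbf{m}_\sigma^r)$, hence $\phi(j,k) \in T_n \backslash D^*(\mathbf{m}_\sigma^r)$ and the image $\mathbf{m}_\sigma^r \cdot \phi(i,j)$ is still attained. This yields $S(\mathbf{m}_\sigma^r,1) = \{\mathbf{m}_\sigma^r \cdot \phi : \phi \in T_n \backslash D^*(\mathbf{m}_\sigma^r)\}$.

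The second step is to observe that the surjection $\phi \mapsto \mathbf{m}_\sigma^r \cdot \phi$ from $T_n \backslash D^*(\mathbf{m}_\sigma^r)$ onto $S(\mathbf{m}_\sigma^r,1)$ is injective, which is precisely the content of Lemma \ref{D*}. It is therefore a bijection, so $|S(\mathbf{m}_\sigma^r,1)| = |T_n \backslash D^*(\mathbf{m}_\sigma^r)| = |T_n| - |D^*(\mathbf{m}_\sigma^r)|$. Substituting $|T_n| = 1 + (n-1)^2$ and $|D^*(\mathbf{m}_\sigma^r)| = |D(\mathbf{m}_\sigma^r)| + |E(\mathbf{m}_\sigma^r)|$ then gives the stated identity.

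The main obstacle is the first step, specifically the verification that removing $E(\mathbf{m}_\sigma^r)$ discards no point of the sphere. The delicate point is ensuring that the witness $\phi(j,k)$ promised by the definition of $E$ genuinely survives inside $T_n \backslash D^*(\mathbf{m}_\sigma^r)$ rather than being itself among the discarded translocations; the orientation asymmetry built into $E$ (all its members satisfy $i<j$, while the witness runs in the opposite direction) is exactly what resolves this. Once this is secured, the remainder is bookkeeping resting on Lemmas \ref{D} and \ref{D*} together with the disjointness of $D$ and $E$.
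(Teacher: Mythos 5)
Your proposal is correct and follows essentially the same route as the paper's proof: a bijection between $T_n \backslash D^*(\mathbf{m}_\sigma^r)$ and the sphere, with surjectivity coming from Lemma \ref{D} together with the definition of $D^*$, injectivity from Lemma \ref{D*}, and the count finished via $|T_n| = 1+(n-1)^2$ and the disjointness of $D$ and $E$. If anything, you make explicit a step the paper leaves implicit --- that each discarded $E$-translocation's witness $\phi(j,k)$ survives in $T_n \backslash D^*(\mathbf{m}_\sigma^r)$ because its indices run in the opposite orientation from every member of $E$ --- which is a worthwhile clarification but not a different argument.
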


\begin{proof}
By the definition of $D^*(\mathbf{m}_\sigma^r)$ and 
lemma \ref{D}, 
\begin{eqnarray*}
&&\{\mathbf{m}_\sigma^r \cdot \phi \in \mathcal{M}_r(\mathbb{S}_n) :
\phi \in T_n\backslash D^*(\mathbf{m}_\sigma^r)\} \\
&=&
\{\mathbf{m}_\sigma^r\cdot \phi \in \mathcal{M}_r(\mathbb{S}_n) \; :\; 
\phi \in T_n\backslash D(\mathbf{m}_\sigma^r)\} \\
&=& S(\mathbf{m}_\sigma^r,1).
\end{eqnarray*}
This implies $|T_n\backslash D^*(\mathbf{m}_\sigma^r)| 
\ge |S(\mathbf{m}_\sigma^r,1)|.$
By lemma \ref{D*}, for $\phi_1,\phi_2 \in 
T_n\backslash D^*(\mathbf{m}_\sigma^r),$ 
if $\phi_1 \ne \phi_2,$ then $\mathbf{m}_\sigma^r\cdot \phi_1 
\ne \mathbf{m}_\sigma^r\cdot\phi_2.$ 
Hence we have $|T_n\backslash D^*(\mathbf{m}_\sigma^r)|
\le |S(\mathbf{m}_\sigma^r,1)|,$ which implies that 
$|T_n\backslash D^*(\mathbf{m}_\sigma^r)|
= |S(\mathbf{m}_\sigma^r,1)|.$
It remains to show that 
$|T_n\backslash D^*(\mathbf{m}_\sigma^r)| 
= (n-1)^2 + 1 - |D(\mathbf{m}_\sigma^r)| - |E(\mathbf{m}_\sigma^r)|.$ 
This is an immediate consequence of the fact that 
$|T_n| = (n-1)^2 + 1$ and 
$D(\mathbf{m}_\sigma^r) \cap E(\mathbf{m}_\sigma^r) = \varnothing$.
\end{proof}

Theorem \ref{ballcalc} reduces the calculation of 
$|S(\mathbf{m}_\sigma^r,1)|$ to calculating 
$|D(\mathbf{m}_\sigma^r)|$ and $|E(\mathbf{m}_\sigma^r|$. 
It is an easy matter to calculate $|D(\mathbf{m}_\sigma^r)|$, 
since it is exactly equal to $(n-2)$ times the number of 
$i \in [n]$ such that 
$\mathbf{m}_\sigma^r(i) = \mathbf{m}_\sigma^r(i-1)$ plus 
$(r-1)$ times the number of $i \in [n]$ such that 
$\mathbf{m}_\sigma^r(i) \ne \mathbf{m}_\sigma^r(i-1)$. 
We also showed how to calculate $|E(\mathbf{m})|$ earlier. 
The next example is an application of Theorem \ref{ballcalc} 

\begin{example}
Suppose $\sigma := [1,2,3,4,9,6,7,11,5,10,12,8].$  Then 
$\mathbf{m}_\sigma^3 := (1,1,1,2,3,2,3,2,4,4,3,4)$. 
There are 3 values of $i \in [12]$ 
such that $\mathbf{m}_\sigma^3(i) = \mathbf{m}_\sigma^3(i-1)$, 
which implies that $|D(\mathbf{m}_\sigma^3| = 
(3)(12-2) + (12-3)(3-1) = 48$. Meanwhile, by 
Lemmas \ref{E*} and \ref{oddeven}, 
$|E(\mathbf{m}_\sigma^3)| = ((5-3)/2)((5-1)/2)) = 2$. 
By Theorem \ref{ballcalc}, 
 $|S(\mathbf{m}_\sigma^3), 1| = (12-1)^2 - 48 - 2 = 71$. 
\end{example}

\section{Min/Max Spheres and Code Size Bounds}\label{minmax} 
In this section we show choices of center achieving 
 minimum and maximum $r$-regular Ulam sphere sizes
for the radius $t=1$ case. The minimum and maximum values 
are explicitly given. We then discuss 
resulting bounds on code size. First let us consider 
the $r$-regular Ulam sphere of minimal size.

\begin{lemma}\label{esphere}
$|S(\mathbf{m}_e^r, 1)| \le |S(\mathbf{m}_\sigma^r,1)|$
\end{lemma}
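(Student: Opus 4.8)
The plan is to read off everything from Theorem \ref{ballcalc}. It gives
\[
|S(\mathbf{m}_\sigma^r,1)| = 1 + (n-1)^2 - |D(\mathbf{m}_\sigma^r)| - |E(\mathbf{m}_\sigma^r)|,
\]
and the identical formula holds for the center $\mathbf{m}_e^r$. Since the term $1+(n-1)^2$ is common, the claim $|S(\mathbf{m}_e^r,1)|\le|S(\mathbf{m}_\sigma^r,1)|$ is equivalent to
\[
|D(\mathbf{m}_\sigma^r)| + |E(\mathbf{m}_\sigma^r)| \le |D(\mathbf{m}_e^r)| + |E(\mathbf{m}_e^r)|,
\]
i.e.\ to the assertion that $\mathbf{m}_e^r$ carries the \emph{largest} possible total number of duplicate translocations. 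Thus the entire lemma reduces to maximizing this total over all $r$-regular multipermutations.

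First I would dispose of the center $\mathbf{m}_e^r$. Because $\mathbf{m}_e^r=(1,\dots,1,2,\dots,2,\dots)$ is non-decreasing, with each of the $n/r$ symbols occupying one contiguous block, it contains no alternating substring of length $\ge 4$, so $|E(\mathbf{m}_e^r)| = 0$ by Lemma \ref{E*}. It has exactly $n-n/r$ indices $i$ with $\mathbf{m}_e^r(i)=\mathbf{m}_e^r(i-1)$, so the counting formula for $|D|$ established after Theorem \ref{ballcalc} gives $|D(\mathbf{m}_e^r)| = (n-2)(n-n/r)+(r-1)(n/r)$. For a general $\mathbf{m}_\sigma^r$, write $a_\sigma$ for the number of indices with $\mathbf{m}_\sigma^r(i)=\mathbf{m}_\sigma^r(i-1)$; the same formula yields $|D(\mathbf{m}_\sigma^r)| = (r-1)n + (n-r-1)a_\sigma$. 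Since $a_\sigma$ equals $n$ minus the number of maximal constant blocks, and that block count is at least $n/r$ with equality exactly when each symbol is contiguous, $a_\sigma$ attains its maximum value $n-n/r$ precisely at $\mathbf{m}_e^r$. Subtracting, the desired inequality collapses to the single estimate
\[
|E(\mathbf{m}_\sigma^r)| \le (n-r-1)\,\delta, \qquad \delta := (n-n/r) - a_\sigma \ge 0 .
\]

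The heart of the argument, and the step I expect to be the main obstacle, is this last inequality: it quantifies the trade-off that creating alternating patterns (which produce $E$-duplicates) necessarily destroys constant blocks (which produce $D$-duplicates). I would prove it by decomposing $\mathbf{m}_\sigma^r$ into its maximal alternating substrings of lengths $L_1,L_2,\dots$ (only those with $L_t\ge 4$ matter). By Lemma \ref{oddeven}, the $t$-th one contributes at most $\bigl((L_t-2)/2\bigr)^2$ to $|E(\mathbf{m}_\sigma^r)|$. On the other side, within an alternating run of length $L_t$ every element forms its own constant block, which forces at least $L_t-2$ extra block-breaks; tracking these breaks (using that distinct maximal alternating runs overlap in at most one position) yields $\delta \ge \sum_t (L_t-2)$. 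It then suffices to verify the elementary per-run inequality $\bigl((L-2)/2\bigr)^2 \le (n-r-1)(L-2)$, i.e.\ $L-2 \le 4(n-r-1)$, which follows from $L\le n$ together with $3n \ge 4r+2$ (a consequence of $n\ge 2r$). Summing over $t$ gives $|E(\mathbf{m}_\sigma^r)| \le (n-r-1)\sum_t(L_t-2) \le (n-r-1)\delta$, as needed.

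The delicate bookkeeping is the attribution $\delta \ge \sum_t(L_t-2)$ when several maximal alternating runs share a common symbol. There I would argue symbol-by-symbol: since the runs are position-disjoint up to single-point overlaps, the singleton blocks a symbol $v$ acquires inside these runs are distinct, and they already account for the required number of extra blocks $c_v-1$, so that no break is double-counted. Once that accounting is pinned down the remainder is routine arithmetic, and the degenerate cases ($r=1$, where every $a_\sigma=0$ and all spheres are equal, and $n/r=1$, where the sphere is a singleton) are immediate.
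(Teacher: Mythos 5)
Your proof follows the same route as the paper's: reduce via Theorem \ref{ballcalc} to showing $|D(\mathbf{m}_\sigma^r)|+|E(\mathbf{m}_\sigma^r)|\le|D(\mathbf{m}_e^r)|+|E(\mathbf{m}_e^r)|$, then charge each gain in $|E|$ coming from a locally maximal alternating run against a larger loss in $|D|$. Your arithmetic reduction to $|E(\mathbf{m}_\sigma^r)|\le(n-r-1)\,\delta$ is correct, as is the per-run inequality $\bigl((L-2)/2\bigr)^2\le(n-r-1)(L-2)$ (the paper gets this more cheaply from $L\le 2r$ --- an alternating substring involves only two symbols, each occurring at most $r$ times --- but your route via $L\le n$ also works).

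The gap is exactly at the step you flagged: the inequality $\delta\ge\sum_t(L_t-2)$, and specifically your symbol-by-symbol justification of it. You claim that the interior singleton blocks of each symbol $v$ are absorbed by that symbol's own budget $c_v-1$ of extra blocks. That is false: a symbol can have \emph{all} of its blocks be interior singletons of length-$\ge 4$ runs. Take $n=12$, $r=4$, and
\[
\mathbf{m}_\sigma^4=(2,1,2,1,2,2,3,1,3,1,3,3).
\]
The locally maximal alternating runs of length $\ge 4$ are $[1,5]$ and $[7,11]$, with interiors $\{2,3,4\}$ and $\{8,9,10\}$; symbol $1$ occurs exactly at the interior positions $2,4,8,10$, so all four of its blocks are interior singletons and $c_1-1=3<4$. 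Your per-symbol ledger runs a deficit at symbol $1$; the aggregate inequality survives in this example ($\delta=7\ge 6$) only because the \emph{partner} symbols $2$ and $3$ each contribute two blocks that are not interior singletons (the blocks containing the two endpoints of their respective runs). So the inequality you need is true, but it requires a cross-symbol compensation argument rather than the attribution you propose: a run whose interior symbol is ``hidden'' must have odd length with its partner at both endpoints; those two endpoints lie in distinct non-interior blocks; and counting occurrences (the partner appears once more than the hidden symbol in each such run) shows there are at least as many partner symbols as hidden symbols, which restores the needed total of at least $n/r$ non-interior blocks, i.e.\ $\delta\ge\sum_t(L_t-2)$. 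As written, your bookkeeping step is not merely unfinished --- the per-symbol claim it rests on is provably wrong. (To be fair, the paper's own proof dispatches this same point in a single unproven sentence, so this is a weakness of the published argument as well, but your proposal makes the faulty step explicit.)
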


\begin{proof}
In the case that $n/r = 1,$ then 
$\mathbf{m}_e^r = e$ and $\mathbf{m}_\sigma^r = \sigma$, 
so that 
$|S(\mathbf{m}_e^r, 1)| = |S(\mathbf{m}_\sigma^r,1)|$. 
Therefore we may assume that $n/r \ge 2$. 
By Theorem \ref{ballcalc}, 
$\underset{\sigma \in \mathbb{S}_n}{\min}
(|S(\mathbf{m}_\sigma^r,1)|)  
= 1 + (n-1)^2 - 
\underset{\sigma \in \mathbb{S}_n}{\max} 
(|D(\mathbf{m}_\sigma^r)| + |E(\mathbf{m}_\sigma^r)|). 
$
Since $n/r \ge 2,$ we know that 
$n-2 > r-1,$ which implies that for all 
$\sigma \in \mathbb{S}_n$, that 
$|D(\mathbf{m}_\sigma^r)|$ is maximized by 
maximizing the number of integers $i \in [n]$ such that 
$\mathbf{m}_\sigma^r(i) = \mathbf{m}_\sigma^r(i-1)$. 
This is accomplished by choosing $\sigma = e,$ and 
hence for all 
$\sigma \in \mathbb{S}_n$, we have 
$|D(\mathbf{m}_e^r)| \ge |D(\mathbf{m}_\sigma^r)|$.  

We next will show that for any increase in the size of 
$|E(\mathbf{m}_\sigma^r)|$ compared to 
$|E(\mathbf{m}_e^r)|$, that $|D(\mathbf{m}_\sigma^r)|$ 
is decreased by a larger value compared to 
$|D(\mathbf{m}_e^r)|$, so that 
$(|D(\mathbf{m}_\sigma^r)| + |E(\mathbf{m}_\sigma^r)|)$ is 
maximized when $\sigma = e$. 

Suppose $\sigma \in \mathbb{S}_n$. 
By Lemmas \ref{E*} and \ref{oddeven}, 
$|E(\mathbf{m}_\sigma^r|$ is characterized by 
the lengths of its locally maximal alternating substrings.  
For every locally maximal alternating substring 
$\mathbf{m}_\sigma^r[a,a+k-1]$ of 
$\mathbf{m}_\sigma^r$ of length $k$, 
there are at least $k-2$ fewer instances where 
$\mathbf{m}_\sigma^r = \mathbf{m}_\sigma^r(i-1)$ 
when compared to instances where 
$\mathbf{m}_e^r(i) = \mathbf{m}_e^r(i-1)$. 
This is because for all $i \in [a+1,a+k-1]$, 
$\mathbf{m}_\sigma^r(i) \ne \mathbf{m}_\sigma^r(i-1)$. 
Hence for each locally maximal alternating substring 
$\mathbf{m}_\sigma^r(a, a+k-1)$, then 
$|D(\mathbf{m}_\sigma^r)|$ is decreased by at least 
$(k-2)(n-2 - (r-1)) \ge (k-2)(r-1)$ when compared to 
$|D(\mathbf{m}_e^r)|$. 
Meanwhile, $|E(\mathbf{m}_\sigma^r)|$ 
is increased by the same 
locally maximal alternating substring by at most 
$(k-2)((k-2)/4)$ by Lemma \ref{oddeven}. 
However, since $k \le 2r$, we have 
$(k-2)((k-2)/4) \le (k-2)(r-1)/2$, which is 
of course less than $(k-2)(r-1)$. 
\end{proof}

Lemma \ref{esphere}, along with 
Proposition \ref{onesphere} implies that 
the $r$-regular Ulam sphere size of radius $t=1$ 
is bounded (tightly) below by $(1 + (n-1)(n/r-1))$. 
This in turn implies the 
following sphere-packing type upper bound on 
any single error-correcting code.

\begin{lemma}\label{upperbound}
Let $C$ be a single-error correcting
 $\mathsf{MPC}_\circ(n,r)$ code. 
Then 
\[
|C|_r \le \frac{n!}{(r!)^{n/r}\left(1+(n-1)(n/r-1)\right)}.
\]
\end{lemma}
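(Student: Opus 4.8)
The plan is to apply a sphere-packing (Hamming-type) bound in the space of $r$-regular multipermutations. First I would compute the size of the ambient space $\mathcal{M}_r(\mathbb{S}_n)$. Since each equivalence class $R_r(\sigma)$ has cardinality $(r!)^{n/r}$ (one may freely permute the $r$ preimages lying in each symbol-block without changing $\mathbf{m}_\sigma^r$), the map $\sigma \mapsto \mathbf{m}_\sigma^r$ is $(r!)^{n/r}$-to-one, whence $|\mathcal{M}_r(\mathbb{S}_n)| = n!/(r!)^{n/r}$.

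Next I would use the single-error-correcting hypothesis to establish that the radius-$1$ spheres centered at the codeword multipermutations are pairwise disjoint. Because $C$ corrects one error, its minimum $r$-regular Ulam distance is at least $3$. Appealing to the triangle inequality for $\mathrm{d}_\circ^r$, if two distinct codewords $\mathbf{m}_\sigma^r, \mathbf{m}_\pi^r \in \mathcal{M}_r(C)$ had spheres sharing a common point $\mathbf{m}_\tau^r$, then $\mathrm{d}_\circ^r(\sigma,\pi) \le \mathrm{d}_\circ^r(\sigma,\tau) + \mathrm{d}_\circ^r(\tau,\pi) \le 2 < 3$, a contradiction. Hence the collection $\{S(\mathbf{m}_\sigma^r, 1) : \mathbf{m}_\sigma^r \in \mathcal{M}_r(C)\}$ is pairwise disjoint, and all of these spheres are contained in $\mathcal{M}_r(\mathbb{S}_n)$.

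I would then bound each sphere from below: by Lemma \ref{esphere} combined with Proposition \ref{onesphere}, every radius-$1$ sphere satisfies $|S(\mathbf{m}_\sigma^r,1)| \ge 1 + (n-1)(n/r-1)$. Summing the disjoint sphere cardinalities inside the ambient space then gives
\[
|C|_r \cdot \bigl(1 + (n-1)(n/r-1)\bigr) \le \frac{n!}{(r!)^{n/r}},
\]
and dividing through yields the claimed inequality.

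The main obstacle I anticipate is not the counting step but justifying the two metric facts invoked above: that single-error correction forces minimum distance at least $3$, and that $\mathrm{d}_\circ^r$ obeys the triangle inequality. The former is the standard coding-theoretic interpretation of error correction; the latter requires confirming that $\mathrm{d}_\circ^r$ is a genuine metric on equivalence classes, which follows from its translocation characterization in Lemma \ref{translocations}, since concatenating translocation sequences realizes subadditivity.
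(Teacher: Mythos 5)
Your proof is correct and takes essentially the same approach as the paper: the paper's proof simply invokes ``a standard sphere-packing bound argument'' together with Proposition \ref{onesphere} and Lemma \ref{esphere}, which is exactly what you spell out (the ambient count $|\mathcal{M}_r(\mathbb{S}_n)| = n!/(r!)^{n/r}$, disjointness of radius-$1$ spheres from minimum distance at least $3$, and the lower bound on every sphere size). Your extra care in justifying the triangle inequality for $\mathrm{d}_\circ^r$ via the translocation characterization of Lemma \ref{translocations} merely fills in a detail the paper leaves implicit.
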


\begin{proof}
Let $C$ be a single-error correcting 
$\mathsf{MPC}_\circ(n,r)$ code. 
A standard sphere-packing bound argument 
implies that 
$|C|_r \le \frac{n!}{(r!)^{n/r}
(\underset{\sigma \in \mathbb{S}_n}{\min} 
\left(
|S(\mathbf{m}_\sigma,1)|
\right)}$.
The remainder of the proof follows from 
Proposition \ref{onesphere} and Lemma \ref{esphere}.
\end{proof}

We have seen that $|S(\mathbf{m}_\sigma^r)|$ is minimized 
when $\sigma = e$. We now discuss the choice of center 
yielding the maximal sphere size. 
Let 
$\omega \in \mathbb{S}_n$ be defined as follows: 
$\omega(i) := ((i-1)\mod (n/r))r + \lceil ir/n \rceil $ 
and $\omega := [\omega(1), \omega^*(2), \dots \omega^*(n)].$ 
With this definition, for all $i \in [n],$ we have 
$\mathbf{m}_{\omega}^r(i) = i \mod (n/r)$ 
For example, if $r = 3$ and $n=12,$ then 
$\omega = [1, 4, 7, 10, 2, 5, 8, 11, 3, 6, 9, 12]$ and 
$\mathbf{m}_{\omega}^r = (1, 2, 3, 4, 1, 2, 3, 4, 1, 2, 3, 4).$ 
We can use Theorem \ref{ballcalc} to calculate 
$|S(\mathbf{m}_\omega^r,1)|$, and then 
show that this is the 
maximal $r$-regular Ulam sphere size 
(except for the case when $n/r = 2$).

\begin{lemma}\label{omegasphere}
Let 
$n/r \ne 2.$  
Then 
$|S(\mathbf{m}_\sigma^r,1)| 
\le |S(\mathbf{m}_{\omega}^r,1)|$ 
and if $n/r > 2,$ then 
$|S(\mathbf{m}_{\omega}^r,1)| 
=  (1+(n-1)^2) - (r-1)n$.
\end{lemma}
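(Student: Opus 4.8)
The plan is to reduce both assertions to a single statement about the combined duplication count $|D(\mathbf{m}_\sigma^r)| + |E(\mathbf{m}_\sigma^r)|$ via Theorem~\ref{ballcalc}. Since $|S(\mathbf{m}_\sigma^r,1)| = 1 + (n-1)^2 - |D(\mathbf{m}_\sigma^r)| - |E(\mathbf{m}_\sigma^r)|$, maximizing the sphere size over centers is the same as minimizing $|D| + |E|$. I will show that this minimum equals $(r-1)n$ and is attained at $\mathbf{m}_\omega^r$, which simultaneously yields the explicit formula (for $n/r > 2$) and the inequality (for $n/r \ne 2$).

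First I would compute the two duplication quantities at the distinguished center $\mathbf{m}_\omega^r$. By construction $\mathbf{m}_\omega^r$ is the periodic sequence obtained by repeating the block $1,2,\dots,n/r$ exactly $r$ times, so no two adjacent entries coincide; in the notation of the $|D|$ formula following Theorem~\ref{ballcalc}, the number of indices $i$ with $\mathbf{m}_\omega^r(i) = \mathbf{m}_\omega^r(i-1)$ is $0$, whence $|D(\mathbf{m}_\omega^r)| = (r-1)n$. For $|E(\mathbf{m}_\omega^r)|$ I would invoke Lemmas~\ref{E*} and~\ref{oddeven}: any alternating substring of length at least $3$ would force $\mathbf{m}_\omega^r(i) = \mathbf{m}_\omega^r(i+2)$, i.e. $2 \equiv 0 \pmod{n/r}$, which fails precisely when $n/r > 2$. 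Hence $\mathbf{m}_\omega^r$ contains no alternating substring of length $\ge 4$ and $|E(\mathbf{m}_\omega^r)| = 0$. Substituting into Theorem~\ref{ballcalc} gives $|S(\mathbf{m}_\omega^r,1)| = (1 + (n-1)^2) - (r-1)n$, proving the second assertion.

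Next I would establish the matching lower bound on $|D| + |E|$ for an arbitrary center. Writing $s$ for the number of indices $i$ with $\mathbf{m}_\sigma^r(i) = \mathbf{m}_\sigma^r(i-1)$, the $|D|$ formula rearranges to $|D(\mathbf{m}_\sigma^r)| = (r-1)n + s\,(n - r - 1)$. When $n/r > 2$ we have $n > 2r$, so $n - r - 1 > 0$; since $s \ge 0$ this gives $|D(\mathbf{m}_\sigma^r)| \ge (r-1)n$, and as $|E(\mathbf{m}_\sigma^r)| \ge 0$ we obtain $|D(\mathbf{m}_\sigma^r)| + |E(\mathbf{m}_\sigma^r)| \ge (r-1)n = |D(\mathbf{m}_\omega^r)| + |E(\mathbf{m}_\omega^r)|$. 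Applying Theorem~\ref{ballcalc} once more converts this into $|S(\mathbf{m}_\sigma^r,1)| \le |S(\mathbf{m}_\omega^r,1)|$. The remaining case $n/r = 1$ is immediate: every permutation then yields the constant multipermutation $(1,1,\dots,1)$, so there is a single sphere and the inequality degenerates to an equality.

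The only real subtlety, and the point where the hypothesis enters, is the sign of the coefficient $n - r - 1$. For $n/r > 2$ it is strictly positive, so each adjacent-equal pair strictly inflates $|D|$ and cannot be offset by any reduction in $|E|$; this is exactly why the maximally spread center $\omega$, with $s = 0$ and no long alternating runs, is optimal. The excluded case $n/r = 2$ is genuinely different, since there $\mathbf{m}_\omega^r = (1,2,1,2,\dots)$ is fully alternating, making $|E(\mathbf{m}_\omega^r)|$ large rather than $0$, so $\omega$ need not minimize $|D| + |E|$. I expect the one step requiring care to be the verification that $\mathbf{m}_\omega^r$ admits no alternating substring of length $\ge 4$ when $n/r > 2$, but this reduces to the short modular computation indicated above.
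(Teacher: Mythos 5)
Your proof is correct and follows essentially the same route as the paper's: both invoke Theorem~\ref{ballcalc}, compute $|D(\mathbf{m}_\omega^r)| = (r-1)n$ and $E(\mathbf{m}_\omega^r) = \varnothing$ for the explicit formula, bound $|D(\mathbf{m}_\sigma^r)| + |E(\mathbf{m}_\sigma^r)|$ below by $(r-1)n$ using the sign of $n-2-(r-1) = n-r-1$, and dispose of $n/r=1$ trivially. Your rearrangement $|D(\mathbf{m}_\sigma^r)| = (r-1)n + s(n-r-1)$ is just a cleaner algebraic packaging of the paper's observation that $r-1 < n-2$ forces the minimum of $|D|$ to occur at $s=0$.
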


\begin{proof}
Assume $n/r \ne 2.$  
First notice that if $n/r = 1$ then for any $\pi \in \mathbb{S}_n$ 
(including $\pi = \omega$), the sphere
$S(\mathbf{m}_{\pi}^r,1)$
contains exactly one 
element (the tuple of the form $(1,1,\dots,1)$). 
Hence the lemma holds trivially in this instance. 
Next, assume that $n/r > 2.$  
We will first prove that $|S(\mathbf{m}_{\omega}^r,1)| 
=  (1+(n-1)^2) - (r-1)n$.

Since $n/r > 2,$ it is clear that 
 $\mathbf{m}_{\omega}^r$ contains no 
alternating subsequences of length greater than $2$. 
Thus by Lemma \ref{remark2}, 
$E(\mathbf{m}_{\omega}^r) = \varnothing$ 
and therefore by Theorem \ref{ballcalc}, 
$|S(\mathbf{m}_{\omega}^r,1)| = 
1 + (n-1)^2 - |D(\mathbf{m}_{\omega}^r)|.$ 
Since there does not exist $i \in [n]$ such that 
$\mathbf{m}_{\omega}^r (i) = \mathbf{m}_{\omega}^r(i-1)$, 
we have $|D(\mathbf{m}_{\omega}^r)| = (r-1)n$, 
completing the proof of the first statement in the lemma. 

We now prove that 
$|S(\mathbf{m}_\sigma^r,1)| 
\le |S(\mathbf{m}_{\omega}^r,1)|$.
Recall that 
$|D(\mathbf{m}_\sigma^r)|$ is equal to 
$(n-2)$ times the number of 
$i \in [n]$ such that 
$\mathbf{m}_\sigma^r(i) = \mathbf{m}_\sigma^r(i-1)$ plus 
$(r-1)$ times the number of $i \in [n]$ such that 
$\mathbf{m}_\sigma^r(i) \ne \mathbf{m}_\sigma^r(i-1)$. 
But $n/r > 2$ implies that $r-1 < n-2$, which implies 
$\underset{\pi \in \mathbb{S}_n}{\min}
|D(\mathbf{m}_\pi^r,1)| = (r-1)n$. Therefore 

\begin{eqnarray*}
|S(\mathbf{m}_\sigma^r,1)| &\le& 
 1 + (n-1)^2 - \underset{\pi \in \mathbb{S}_n}{\min} 
 |D(\mathbf{m}_\pi^r,1)| \\
 &&\hspace{1.91cm}  - \underset{\pi \in \mathbb{S}_n}{\min}
 |E(\mathbf{m}_\pi^r,1)| \\
 &\le& 1 + (n-1)^2 - \underset{\pi \in \mathbb{S}_n}{\min}
  |D(\mathbf{m}_\pi^r,1)|  \\
 &=& 1 + (n-1)^2 - (r-1)n \\
 &=& |S(\mathbf{m}_{\omega}^r,1)|.
\end{eqnarray*}
\normalsize

\end{proof}

Extending the concept of perfect permutation codes 
discussed in \cite{self}, we define a 
perfect multipermutation code. 
Let $C$ be an $\mathsf{MPC}(n,r)$ code. Then $C$ is a perfect $t$-error 
correcting code if and only if for all 
$\sigma \in \mathbb{S}_n,$ there exists a unique 
$\mathbf{m}_c^r \in \mathcal{M}_r (C)$ 
such that $\mathbf{m}_\sigma^r \in S(\mathbf{m}_c^r, t).$ 
We call such $C$ a \textbf{perfect $t$-error correcting } 
$\mathsf{MPC}(n,r)$. 
With this definition the upper bound of lemma \ref{omegasphere}
implies a lower bound on a perfect single-error correcting 
$\mathsf{MPC}(n,r)$. 

\begin{lemma}\label{perfect bound}
Let $n/r \ne 2$ and let $C$ be a perfect single-error
 correcting $\mathsf{MPC}(n,r)$.Then 
\[
|C|_r \ge \frac{n!}{(r!)^{n/r}  ((1+(n-1)^2) - (r-1)n)}.
\]
\end{lemma}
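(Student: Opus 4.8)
The plan is to run the standard sphere-covering argument, relying on the fact that a perfect code partitions the ambient space into radius-$1$ spheres together with the upper bound on sphere size supplied by Lemma \ref{omegasphere}. First I would unpack the definition of a perfect single-error-correcting $\mathsf{MPC}(n,r)$: it guarantees that for every $\sigma \in \mathbb{S}_n$ there is a \emph{unique} $\mathbf{m}_c^r \in \mathcal{M}_r(C)$ with $\mathbf{m}_\sigma^r \in S(\mathbf{m}_c^r, 1)$. Reading this as an existence statement and a uniqueness statement separately shows that the spheres $\{S(\mathbf{m}_c^r, 1)\}_{\mathbf{m}_c^r \in \mathcal{M}_r(C)}$ both cover $\mathcal{M}_r(\mathbb{S}_n)$ and are pairwise disjoint, hence form a partition. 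Consequently
\[
|\mathcal{M}_r(\mathbb{S}_n)| = \sum_{\mathbf{m}_c^r \in \mathcal{M}_r(C)} |S(\mathbf{m}_c^r, 1)|.
\]

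Next I would evaluate the left-hand side. Each equivalence class $R_r(\sigma)$ has exactly $(r!)^{n/r}$ members, since within each of the $n/r$ blocks of $r$ positions that map to a common symbol one may freely permute the $r$ underlying values of $\sigma$ without altering $\mathbf{m}_\sigma^r$. Because the map $\sigma \mapsto \mathbf{m}_\sigma^r$ is surjective onto $\mathcal{M}_r(\mathbb{S}_n)$ with all fibers of this common size, I conclude $|\mathcal{M}_r(\mathbb{S}_n)| = n!/(r!)^{n/r}$. For the right-hand side I would invoke Lemma \ref{omegasphere}: since $n/r \ne 2$, every center satisfies $|S(\mathbf{m}_c^r,1)| \le |S(\mathbf{m}_\omega^r,1)|$, and when $n/r > 2$ the latter equals $(1+(n-1)^2)-(r-1)n$. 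Bounding each summand by this maximum yields
\[
\frac{n!}{(r!)^{n/r}} \le |C|_r \cdot \bigl((1+(n-1)^2)-(r-1)n\bigr),
\]
and rearranging gives the claimed lower bound on $|C|_r$.

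The argument is almost entirely bookkeeping, so I do not expect a genuine obstacle; the substantive input—that the maximal radius-$1$ sphere size is $(1+(n-1)^2)-(r-1)n$—is already delivered by Lemma \ref{omegasphere}. The only point requiring care is the degenerate case $n/r = 1$, where Lemma \ref{omegasphere} gives sphere size $1$ rather than the displayed polynomial (indeed the polynomial can even be nonpositive there). I would dispatch that case separately: the ambient space $\mathcal{M}_r(\mathbb{S}_n)$ is then a single multipermutation, so $|C|_r = 1$ and the inequality holds immediately. For the principal range $n/r > 2$ the covering computation above goes through verbatim.
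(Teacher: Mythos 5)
Your proposal is correct and follows essentially the same route as the paper: the perfect-code property makes the radius-$1$ spheres partition $\mathcal{M}_r(\mathbb{S}_n)$, whose size is $n!/(r!)^{n/r}$, and bounding each sphere by the maximum from Lemma \ref{omegasphere} gives the result. Your explicit treatment of the degenerate case $n/r = 1$ (where the formula of Lemma \ref{omegasphere} does not apply) is actually more careful than the paper's proof, which silently relies on the $n/r > 2$ formula.
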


\begin{proof}
Suppose $n/r \ne 2$ and $C$ is a perfect single-error 
correcting $\mathsf{MPC}(n,r)$. 
Then $\underset{c \in C}{\sum}
|S(\mathbf{m}_c^r,1)| = \frac{n!}{(r!)^{n/r}}$. 
This means 
\begin{eqnarray*}
\left(|C|_r \right) \cdot
\left( \underset{c \in C}{\max}
\left(|S(\mathbf{m}_c^r,1)|\right)\right)
\ge \frac{n!}{(r!)^{n/r}},
\end{eqnarray*} 
which by Lemma \ref{omegasphere} 
implies the desired result. 
\end{proof}


A more general lower bound is easily
obtained by applying Lemma \ref{omegasphere} with a 
standard Gilbert-Varshamov bound argument.
In the lemma statement, $C$ is an $\mathsf{MPC}_\circ(n,r,d)$
 if and only if $C$ is an $\mathsf{MPC}(n,r)$ such that 
$\underset{\sigma,\pi \in C, \sigma \ne \pi}{\min} 
\mathrm{d}_\circ^r(\sigma,\pi)=d.$  

\begin{lemma}\label{G-V bound}
Let $n/r \ne 2$ and $C$ be an 
$\mathsf{MPC}_\circ(n,r,d)$ code of maximal cardinality. Then 
\[
|C|_r \ge \frac{n!}{(r!)^{n/r} (1 + (n-1)^2 - (r-1)n )^{d-1}  }
\]
\end{lemma}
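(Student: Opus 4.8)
The plan is to run the standard Gilbert--Varshamov covering argument, whose only nonstandard ingredient is a bound on the volume of a radius-$(d-1)$ ball in terms of the maximal radius-$1$ sphere size supplied by Lemma \ref{omegasphere}. Write $M := 1 + (n-1)^2 - (r-1)n$, which by Lemma \ref{omegasphere} (applicable since $n/r \ne 2$) satisfies $|S(\mathbf{m}_\pi^r,1)| \le M$ for \emph{every} $\pi \in \mathbb{S}_n$.

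First I would prove the volume bound $|S(\mathbf{m}_\sigma^r, t)| \le M^t$ for every center and every $t \ge 0$, by induction on $t$. The base case $t=0$ holds since $|S(\mathbf{m}_\sigma^r,0)| = 1$. For the inductive step I would use Lemma \ref{translocations}: if $\mathbf{m}_\pi^r$ lies at distance $k \le t$ from $\mathbf{m}_\sigma^r$, take a minimal translocation sequence $\mathbf{m}_\sigma^r \cdot \phi_1 \cdots \phi_k = \mathbf{m}_\pi^r$ and stop one step early at $\mathbf{m}' := \mathbf{m}_\sigma^r \cdot \phi_1 \cdots \phi_{k-1}$, which (being of the form $\mathbf{m}_{\sigma\phi_1\cdots\phi_{k-1}}^r$) is a genuine element of $S(\mathbf{m}_\sigma^r, t-1)$ with $\mathbf{m}_\pi^r \in S(\mathbf{m}', 1)$. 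Hence
\[
S(\mathbf{m}_\sigma^r, t) \subseteq \bigcup_{\mathbf{m}' \in S(\mathbf{m}_\sigma^r, t-1)} S(\mathbf{m}', 1).
\]
Bounding the union by the sum of its parts and applying the inductive hypothesis together with the uniform radius-$1$ bound gives $|S(\mathbf{m}_\sigma^r, t)| \le |S(\mathbf{m}_\sigma^r, t-1)| \cdot M \le M^t$. Setting $t = d-1$ yields $|S(\mathbf{m}_\sigma^r, d-1)| \le M^{d-1}$.

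Next I would invoke maximality of $C$ to produce a covering. Reading the minimum-distance condition of an $\mathsf{MPC}_\circ(n,r,d)$ at the level of distinct codeword multipermutations, suppose some $\mathbf{m}_x^r \in \mathcal{M}_r(\mathbb{S}_n)$ were at distance $\ge d$ from every codeword multipermutation. Then adjoining $R_r(x)$ to $C$ would yield a strictly larger $\mathsf{MPC}_\circ(n,r,d)$ (it is still a union of equivalence classes, and the minimum distance stays exactly $d$ since all the new distances are $\ge d$), contradicting maximal cardinality. Hence the radius-$(d-1)$ balls centered at the codeword multipermutations cover $\mathcal{M}_r(\mathbb{S}_n)$, and since $|\mathcal{M}_r(\mathbb{S}_n)| = n!/(r!)^{n/r}$,
\[
\frac{n!}{(r!)^{n/r}} \le \sum_{c \in C} |S(\mathbf{m}_c^r, d-1)| \le |C|_r \cdot M^{d-1}.
\]
Rearranging gives the claimed lower bound.

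The main obstacle, and the only place where genuine care is needed, is the inductive volume bound: I must phrase the ``stop one step early'' argument entirely at the level of multipermutations, using Lemma \ref{translocations} so that $r$-regular Ulam distance coincides with minimal translocation count, and I must ensure the maximal radius-$1$ bound $M$ of Lemma \ref{omegasphere} may legitimately be applied at \emph{every} intermediate center $\mathbf{m}'$ — which it can, precisely because that lemma bounds $|S(\mathbf{m}_\pi^r,1)|$ uniformly over all $\pi$. The covering step is then routine once the $\mathsf{MPC}_\circ(n,r,d)$ convention is interpreted over distinct multipermutations.
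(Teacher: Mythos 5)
Your proof is correct and follows essentially the same route as the paper: maximality of $C$ forces the radius-$(d-1)$ spheres centered at codeword multipermutations to cover $\mathcal{M}_r(\mathbb{S}_n)$, and the union bound together with the uniform radius-$1$ bound $M = 1+(n-1)^2-(r-1)n$ from Lemma \ref{omegasphere} yields the stated inequality. The one difference is that the paper simply asserts $\max_{c \in C}|S(\mathbf{m}_c^r,d-1)| \le M^{d-1}$ as if it followed directly from Lemma \ref{omegasphere}, whereas you prove this volume bound explicitly by induction on the radius (decomposing a minimal translocation sequence via Lemma \ref{translocations} and noting each intermediate point is a genuine $r$-regular multipermutation); since that iterated-sphere step is exactly what the paper leaves unjustified, your write-up is, if anything, more complete than the original.
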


\begin{proof}
Suppose that $n/r \ne 2$ and that 
$C$ is an $\mathsf{MPC}_\circ(n,r,d)$ code
of maximal cardinality. 
For all $\sigma \in \mathbb{S}_n$, 
there exists 
$c \in 
C$ such that 
$\mathrm{d}_\circ^r(\sigma,c) \le d-1$. 
Otherwise, we could add 
$\sigma \notin C$ (and its entire equivalence class 
$R_r(\sigma)$) to $C$ while maintaining 
a minimum distance of $d$, contradicting 
the assumption that $|C|_r$ is maximal. 

Therefore 
$\underset{c \in C}{\bigcup}S(\mathbf{m}_c^r,d-1)
= \mathcal{M}_r(\mathbf{S}_n)$.
This in turn implies that 
\begin{eqnarray*}
\underset{c\in C}{\sum}|S(\mathbf{m}_c^r,d-1)| 
\ge \frac{n!}{(r!)^{n/r}}. 
\end{eqnarray*}
Of course, the left hand side of the 
above inequality is less than or equal to 
$\left(|C|_r\right) \cdot
\left( \underset{c \in C}{\max}
\left(|S(\mathbf{m}_c^r,d-1)|\right)\right)$. 
Hence Lemma \ref{omegasphere} implies that 
\[(1+(n-1)^2 - (r-1)n)^{d-1} \ge 
\underset{c\in C}{\max}(|S(\mathbf{m}_c^r,d-1)|),\] 
so the 
conclusion holds. 
\end{proof}

\section{Conclusion} \label{conclusion}
This paper compared the Ulam metric for the permutation 
and multipermutation cases, providing a simplification of 
the $r$-regular Ulam metric. The surprising fact that 
$r$-regular Ulam sphere sizes differ depending upon the 
center was also shown. New methods for calculating the 
size of $r$-regular Ulam sphere sizes were provided, first
using Young Tableaux for spheres of any radius centered at 
$\mathbf{m}_e^r$.  Another method  used duplicate 
translocation sets to calculate sphere sizes for  a radius 
of $t=1$ for any center. Resulting bounds on Code size 
were also provided. Many open questions remain, including 
the existence of perfect codes, sphere size calculation
methods for more general parameters, and tighter 
bounds on code size.

\section*{Acknowledgment}
This paper is partially supported by
KAKENHI 16K12391 and 26289116.

\end{document}